\newtheorem{theorem}{Theorem}
\newtheorem{proposition}{Proposition}
\newtheorem{lemma}{Lemma}
\newtheorem{corollary}{Corollary}
\newtheorem{example}{Example}
\newtheorem{remark}{Remark}
\newcommand{\footremember}[2]{%
    \footnote{#2}
    \newcounter{#1}
    \setcounter{#1}{\value{footnote}}%
}
\newcommand{\footrecall}[1]{%
    \footnotemark[\value{#1}]%
} 
\newcommand\nc\newcommand
\nc\bfa{{\boldsymbol a}}\nc\bfA{{\bf A}}\nc\cA{{\mathcal A}}
\nc\bfb{{\boldsymbol b}}\nc\bfB{{\bf B}}\nc\cB{{\mathcal B}}
\nc\bfc{{\boldsymbol c}}\nc\bfC{{\bf C}}\nc\cC{{\mathcal C}}
\nc\bfd{{\boldsymbol d}}\nc\bfD{{\bf D}}\nc\cD{{\mathcal D}}
\nc\bfe{{\boldsymbol e}}\nc\bfE{{\bf E}}\nc\cE{{\mathcal E}}
\nc\bff{{\boldsymbol f}}\nc\bfF{{\bf F}}\nc\cF{{\mathcal F}}
\nc\bfg{{\boldsymbol g}}\nc\bfG{{\bf G}}\nc\cG{{\mathcal G}}
\nc\bfh{{\boldsymbol h}}\nc\bfH{{\bf H}}\nc\cH{{\mathcal H}}
\nc\bfi{{\boldsymbol i}}\nc\bfI{{\bf I}}\nc\cI{{\mathcal I}}
\nc\bfj{{\boldsymbol j}}\nc\bfJ{{\bf J}}\nc\cJ{{\mathcal J}}
\nc\bfk{{\boldsymbol k}}\nc\bfK{{\bf K}}\nc\cK{{\mathcal K}}
\nc\bfl{{\boldsymbol l}}\nc\bfL{{\bf L}}\nc\cL{{\mathcal L}}
\nc\bfm{{\boldsymbol m}}\nc\bfM{{\bf M}}\nc\cM{{\mathcal M}}
\nc\bfn{{\boldsymbol n}}\nc\bfN{{\bf N}}\nc\cN{{\mathcal N}}
\nc\bfo{{\boldsymbol o}}\nc\bfO{{\bf O}}\nc\cO{{\mathcal O}}
\nc\bfp{{\boldsymbol p}}\nc\bfP{{\bf P}}\nc\cP{{\mathcal P}}
\nc\bfq{{\boldsymbol q}}\nc\bfQ{{\bf Q}}\nc\cQ{{\mathcal Q}}
\nc\bfr{{\boldsymbol r}}\nc\bfR{{\bf R}}\nc\cR{{\mathcal R}}
\nc\bfs{{\boldsymbol s}}\nc\bfS{{\bf S}}\nc\cS{{\mathcal S}}
\nc\bft{{\boldsymbol t}}\nc\bfT{{\bf T}}\nc\cT{{\mathcal T}}
\nc\bfu{{\boldsymbol u}}\nc\bfU{{\bf U}}\nc\cU{{\mathcal U}}
\nc\bfv{{\boldsymbol v}}\nc\bfV{{\bf V}}\nc\cV{{\mathcal V}}
\nc\bfw{{\boldsymbol w}}\nc\bfW{{\bf W}}\nc\cW{{\mathcal W}}\nc\sW{{\mathscr W}}
\nc\bfx{{\boldsymbol x}}\nc\bfX{{\boldsymbol X}}\nc\cX{{\mathcal X}}\nc\sX{{\mathscr X}}
\nc\bfy{{\boldsymbol y}}\nc\bfY{{\boldsymbol Y}}\nc\cY{{\mathcal Y}}\nc\sY{{\mathscr Y}}
\nc\bfz{{\boldsymbol z}}\nc\bfZ{{\boldsymbol Z}}\nc\cZ{{\mathcal Z}}\nc\sZ{{\mathscr Z}}
\newcommand{\RR}{\mathbb{R}}
\newcommand{\FF}{\mathbb{F}}
\newcommand{\rank}{\textup{rank}}
\newcommand{\minrk}{\textup{minrk}}
\newcommand{\chrnb}{\overline{\chi}}
\begin{document}
\title{A Bound on the Shannon Capacity via a Linear Programming Variation\thanks{This paper was presented in part at 2017 IEEE International Symposium on Information Theory.} }
\author{
Sihuang Hu\footremember{alley}{Lehrstuhl D f\"ur Mathematik, RWTH Aachen, Germany (husihuang@gmail.com). 
This work was done while S. Hu was with Department of Electrical Engineering - Systems, Tel Aviv University, Israel. 
Research supported by ERC grant no.~639573, ISF grant no.~1367/14, and the Alexander von Humboldt Foundation.}
\and Itzhak Tamo\footremember{trailer}{Department of Electrical Engineering--Systems, 
  Tel Aviv University, Tel Aviv, Israel (tamo@post.tau.ac.il, ofersha@eng.tau.ac.il).
  The work of I. Tamo was supported by ISF grant no.~1030/15 and NSF-BSF grant no.~2015814.
  The work of O. Shayevitz was supported by ERC grant no.~639573, and ISF grant no.~1367/14.
  }
\and Ofer Shayevitz\footrecall{trailer} 
}

\date{}
\maketitle

\begin{abstract}
We prove an upper bound on the Shannon capacity of a graph via a linear programming variation. We show that our bound can outperform both the Lov\'asz theta number and the Haemers minimum rank bound. As a by-product, we also obtain a new upper bound on the broadcast rate of Index Coding.
\end{abstract}

\section{Introduction}
Let $G=(V(G),E(G))$ be an undirected graph. An \textit{independent set} in $G$ is a subset of pairwise non-adjacent vertices. The \emph{independence number} of $G$, denoted by $\alpha(G)$, is the largest possible size of an independent set in $G$. For two graphs $G$ and $H$, their \emph{strong product} $G\boxtimes H$ is a graph such that
\begin{enumerate}
  \item the vertex set of  $G\boxtimes H$ is the Cartesian product $V(G) \times V(H)$; and
  \item any two distinct vertices $(u,u')$ and $(v,v')$ are adjacent in $G\boxtimes H$ if
    $u\sim v$ and $u'=v'$, or $u=v$ and $u'\sim v'$, or $u\sim v$ and $u'\sim v'$.
\end{enumerate}
The graph $G^n$ is defined inductively by $G^{n}= G^{n-1} \boxtimes G$.
The \emph{Shannon capacity} of a graph $G$ is defined by
\begin{align}\label{eq:shannon_cap_def}
  \Theta(G)
  :=\sup_{n}\sqrt[\leftroot{-3}\uproot{3}n]{\alpha(G^n)}
  =\lim_{n\to\infty}\sqrt[\leftroot{-3}\uproot{3}n]{\alpha(G^n)}
\end{align}
where the limit exists by the supermultiplicativity of $\alpha(G^n)$ and Fekete's Lemma. 

This graph quantity was introduced by Shannon~\cite{Sh1956} as the zero-error capacity in the context of channel coding. 
In this setup, a transmitter would like to communicate a message to a receiver through the channel, and the receiver must decode the message without error. 
This problem can be equivalently cast in terms of the
\textit{confusion graph} $G$ associated with the channel. The vertices of the confusion graph are the input symbols, and two vertices are adjacent if the corresponding inputs can result in the same output. 
It is easy to check that $G^n$ is the confusion graph for $n$ uses of the channel, and that $\alpha(G^n)$ is the maximum number of messages that can be transmitted without error over $n$ uses of the channel.

Despite the apparent simplicity of the problem, a general characterization of $\Theta(G)$ remains elusive. Several lower and upper bounds were obtained by Shannon~\cite{Sh1956}, Lov\'asz~\cite{Lo1979} and Haemers~\cite{Ha1979}. These bounds are briefly reviewed in Section~\ref{sec:known bounds}. 
In Section~\ref{sec:new bound} we present a new bound on the Shannon capacity via a variation on the linear program pertaining to the fractional independence number of the graph.
Next, we show that the new bound can simultaneously outperform both the Lov\'asz theta number and the Haemers minimum rank bound. 
In Section~\ref{sec:IndexCoding}, we leverage our bound to prove a new upper bound for the broadcast rate of Index Coding. 
It should be noted that a fractional version of the Haemers minimum rank bound, denoted $\minrk_f^{\FF}$, was introduced independently 
by Blasiak~\cite{Blasiak2013} and Shanmugam et al.~\cite{ShanmugamAsterisDimakis2015}, and investigated in more detail by Bukh and Cox~\cite{BukhCox2018} very recently. 
This bound is at least as good as $\minrk_{\FF}^*$, one of our new bounds. Nevertheless $\minrk_f^{\FF}$ is very difficult to compute,
and our $\minrk_{\FF}^*$ bound is more tractable and provides a feasible way to approach $\minrk_f^{\FF}$ (see Remark~\ref{FractionalHaemersBound} below for more details).

\section{Upper Bounds on the Shannon Capacity}\label{sec:known bounds}
In this section, we give a brief overview of three well-known upper bounds on the Shannon capacity. Throughout this section let $G$ be a graph with vertex set $V(G)=\{1,2,\dots,m\}$. 
\subsection{Fractional Independence Number}
The fractional independence number is the linear programming relaxation of the $0$-$1$ integer linear programming that computes the independence number.
More precisely, the \emph{fractional independence number} $\alpha_f(G)$ is defined as the optimal value of the following linear program:
\begin{align}
  \label{Primal}
  \begin{split}
 \textup{maximize } &\sum_{x} w(x) \\ 
 \textup{subject to }  & \sum_{x\in C}w(x) \le 1 \textup{ for every clique $C$ in $G$},\\ 
                & w(x)\ge 0.
  \end{split}
\end{align}
(A clique $C$ in $G$ is a subset of the vertices, $C\subseteq V(G)$, such that every two distinct vertices are adjacent in $G$.)
From the duality theorem of linear programming, $\alpha_f(G)$ can also be computed as follows:
\begin{align}
  \label{Dual}
  \begin{split}
 \textup{minimize } &\sum_{C} q(C) \\ 
 \textup{subject to }  &\sum_{C\owns x}q(C) \ge 1 \textup{ for each vertex $x$ in $G$},\\
                &q(C)\ge 0.
              \end{split}
\end{align}
(The optimal value of~\eqref{Dual} is also called the \emph{fractional clique-cover number} of $G$, and denoted as $\overline{\chi}_f(G)$.)

The following bound was first given by Shannon~\cite{Sh1956}, and was also discussed in detail by Rosenfeld~\cite{Ro1967}.
\begin{theorem}\cite[Theorem 7]{Sh1956}
  \label{ShannonBound}
  $\Theta(G)\le\alpha_f(G)$.
\end{theorem}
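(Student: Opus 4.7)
The plan is to go through the dual linear program. By LP duality (as the excerpt already notes) the optimal value of~\eqref{Primal} equals the optimal value of~\eqref{Dual}, so $\alpha_f(G) = \overline{\chi}_f(G)$. Since the indicator vector of an independent set is a feasible solution to~\eqref{Primal}, we immediately have $\alpha(G) \le \alpha_f(G)$. It therefore suffices to show that the fractional clique-cover number is submultiplicative with respect to the strong product, i.e.
\begin{equation*}
\overline{\chi}_f(G \boxtimes H) \le \overline{\chi}_f(G) \cdot \overline{\chi}_f(H).
\end{equation*}

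The key observation is that if $C$ is a clique in $G$ and $C'$ is a clique in $H$, then $C \times C'$ is a clique in $G \boxtimes H$: for any two distinct $(u,u'), (v,v') \in C \times C'$ one of the three adjacency conditions in the definition of $\boxtimes$ is satisfied (using that equal or adjacent vertices in a clique cover both cases). Therefore, given optimal fractional clique covers $q_G$ of $G$ and $q_H$ of $H$, defining $q(C \times C') := q_G(C) q_H(C')$ yields a feasible fractional clique cover of $G \boxtimes H$, since for any vertex $(x,y)$,
\begin{equation*}
\sum_{C \times C' \owns (x,y)} q(C \times C') = \Bigl(\sum_{C \owns x} q_G(C)\Bigr)\Bigl(\sum_{C' \owns y} q_H(C')\Bigr) \ge 1.
\end{equation*}
Its total weight is $\overline{\chi}_f(G) \cdot \overline{\chi}_f(H)$, proving submultiplicativity.

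Combining the pieces, for every $n$,
\begin{equation*}
\alpha(G^n) \le \alpha_f(G^n) = \overline{\chi}_f(G^n) \le \overline{\chi}_f(G)^n = \alpha_f(G)^n.
\end{equation*}
Taking $n$-th roots and passing to the supremum over $n$ in \eqref{eq:shannon_cap_def} gives $\Theta(G) \le \alpha_f(G)$.

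The only non-bookkeeping step is verifying that Cartesian products of cliques remain cliques in the strong product; this is where the specific definition of $\boxtimes$ (as opposed to, say, the categorical product) is essential, and it is the reason the argument cannot be shortened by working directly with $\alpha_f$ instead of its dual. Everything else is LP duality and taking limits.
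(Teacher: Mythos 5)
The paper states Theorem~\ref{ShannonBound} purely as a citation to Shannon and Rosenfeld and does not include a proof of its own, so there is no in-text argument to compare against line by line. Your proof is correct and is the classical one: you pass to the dual ($\alpha_f = \overline{\chi}_f$), observe that the Cartesian product of a clique of $G$ with a clique of $H$ is a clique of $G\boxtimes H$, build the product dual weights $q(C\times C')=q_G(C)\,q_H(C')$ to obtain submultiplicativity $\overline{\chi}_f(G\boxtimes H)\le\overline{\chi}_f(G)\,\overline{\chi}_f(H)$, and finish by taking $n$-th roots; the case analysis behind the clique claim is complete, and feasibility and the objective value of the product weights are verified correctly. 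It is worth noting that the construction you use is precisely what the paper generalizes in Lemma~\ref{Submultiplicative}: there the same product-weight trick on the dual~\eqref{DualEx} is applied to arbitrary subsets $S\times T$ with a general submultiplicative $f$ playing the role that the constant~$1$ plays on cliques in~\eqref{Dual}, and Theorem~\ref{NewBound} is the resulting generalization. So your proof and the paper's technique are the same idea at different levels of generality; the paper's version is what lets one replace $\alpha_f$ by the potentially smaller $f^*$ in the conclusion.
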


\subsection{Lov\'asz Theta Number}
In his seminal paper~\cite{Lo1979}, Lov\'asz solved the long-standing problem of the Shannon capacity of the pentagon graph, by introducing an important new graph invariant, called the Lov\'asz theta number.
An \emph{orthonormal representation of $G$} is a system of unit vectors $\bfv_1,\dots,\bfv_m$ in some Euclidean space $\RR^d\ (d\ge 1)$ such that if $i$ and $j$ are nonadjacent then $\bfv_i$ and $\bfv_j$ are orthogonal (all vectors will be column vectors). The \emph{Lov\'asz theta number} of $G$ is defined as
\begin{align*}
  \vartheta(G):=\min_{\bfc,\bfv_i}\max_{1\le i\le m}\frac{1}{(\bfc^T\bfv_i)^2}
\end{align*}
where the minimum is taken over all unit vectors $\bfc$ and all orthonormal representations $\{\bfv_1,\dots,\bfv_m\}$ of $G$. 
The following bound is the main result of~\cite{Lo1979}.
\begin{theorem}\cite[Theorem 1]{Lo1979} 
  \label{LovaszThetaNumber}
$\Theta(G)\le\vartheta(G).$
\end{theorem}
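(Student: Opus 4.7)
The plan is to prove the two standard ingredients separately and then combine them: first, $\alpha(G) \le \vartheta(G)$ for every graph $G$; second, $\vartheta$ is submultiplicative under the strong product, i.e.\ $\vartheta(G \boxtimes H) \le \vartheta(G)\vartheta(H)$. Applying the second inequality inductively gives $\vartheta(G^n) \le \vartheta(G)^n$, and combining with the first yields $\alpha(G^n) \le \vartheta(G)^n$; taking the $n$-th root and the supremum over $n$ (as in the definition~\eqref{eq:shannon_cap_def}) immediately gives $\Theta(G) \le \vartheta(G)$.

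For the first step, fix an optimal orthonormal representation $\bfv_1,\dots,\bfv_m$ and unit vector $\bfc$ achieving $\vartheta(G)$, so that $(\bfc^T \bfv_i)^2 \ge 1/\vartheta(G)$ for every $i$. Let $I \subseteq V(G)$ be an independent set of maximum size. Since any two vertices in $I$ are non-adjacent, the vectors $\{\bfv_i : i \in I\}$ are pairwise orthogonal unit vectors, so extending to an orthonormal basis and applying Bessel's inequality gives
\begin{align*}
\frac{\alpha(G)}{\vartheta(G)} \;\le\; \sum_{i \in I}(\bfc^T \bfv_i)^2 \;\le\; \|\bfc\|^2 \;=\; 1,
\end{align*}
which yields $\alpha(G) \le \vartheta(G)$.

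For the second step, the natural idea is the tensor product construction. Given optimal data $(\bfc_G,\{\bfv_i\})$ and $(\bfc_H,\{\bfu_j\})$ for $G$ and $H$, I would take the candidate representation of $G \boxtimes H$ to be $\{\bfv_i \otimes \bfu_j\}$ with distinguished unit vector $\bfc_G \otimes \bfc_H$. Unitarity is immediate from $\|\bfx \otimes \bfy\| = \|\bfx\|\cdot\|\bfy\|$. To verify orthogonality, I unpack the strong product definition: two distinct vertices $(i,j)$ and $(i',j')$ are non-adjacent in $G \boxtimes H$ iff either $i \ne i'$ and $i \not\sim i'$ in $G$, or $j \ne j'$ and $j \not\sim j'$ in $H$. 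In either case one of $\bfv_i^T \bfv_{i'}$ or $\bfu_j^T \bfu_{j'}$ vanishes, and therefore $(\bfv_i \otimes \bfu_j)^T (\bfv_{i'} \otimes \bfu_{j'}) = (\bfv_i^T \bfv_{i'})(\bfu_j^T \bfu_{j'}) = 0$. Finally, since $(\bfc_G \otimes \bfc_H)^T(\bfv_i \otimes \bfu_j) = (\bfc_G^T\bfv_i)(\bfc_H^T\bfu_j)$, the maximum of $1/((\bfc_G^T\bfv_i)(\bfc_H^T\bfu_j))^2$ over $i,j$ factors as $\vartheta(G)\vartheta(H)$, giving $\vartheta(G \boxtimes H) \le \vartheta(G)\vartheta(H)$.

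The main obstacle is the correct packaging of the strong-product adjacency relation in the orthogonality check: one must be careful to see that the tensor orthogonality needed for the representation to respect $G \boxtimes H$ matches exactly the complement of the three adjacency conditions listed in the definition. Once this combinatorial bookkeeping is in place, the rest of the proof is just Bessel's inequality and the multiplicativity of tensor inner products, and the conclusion $\Theta(G) \le \vartheta(G)$ follows immediately from the chain $\alpha(G^n) \le \vartheta(G^n) \le \vartheta(G)^n$.
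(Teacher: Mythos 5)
Your proof is correct and reproduces the standard Lov\'asz argument for this result: it establishes $\alpha(G)\le\vartheta(G)$ via Bessel's inequality applied to the pairwise-orthogonal vectors indexed by an independent set, proves $\vartheta(G\boxtimes H)\le\vartheta(G)\vartheta(H)$ by tensoring optimal orthonormal representations and handle vectors, and combines the two via $\alpha(G^n)\le\vartheta(G^n)\le\vartheta(G)^n$. The paper itself does not reprove this theorem but simply cites~\cite{Lo1979}, and your argument is exactly the one given there.
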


In the sequel we will also need the following results from~\cite{Lo1979}. The theta number of odd cycles was calculated by Lov\'asz~\cite{Lo1979}.
\begin{proposition}\cite[Corollary 5]{Lo1979}
  \label{LovaszThetaNumberOfOddCycles}
  For odd $n$,
  \begin{align*}
    \vartheta(C_n) = \frac{n\cos(\pi/n)}{1+\cos(\pi/n)}.
  \end{align*}
\end{proposition}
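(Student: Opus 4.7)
The strategy is to exploit the cyclic symmetry of $C_n$: its adjacency matrix is circulant and $C_n$ is both vertex- and edge-transitive. This suggests seeking matching upper and lower bounds through symmetric (circulant) data, reducing $\vartheta(C_n)$ to a purely spectral computation.

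\emph{Upper bound via an explicit orthonormal representation.} I would first write down a symmetric circulant Gram matrix $H\in\RR^{n\times n}$ with diagonal entries $1$, entries equal to $\alpha$ at positions $|i-j|\equiv \pm 1 \pmod n$, and $0$ elsewhere. Its eigenvalues are $1+2\alpha\cos(2\pi k/n)$ for $k=0,\ldots,n-1$, so $H\succeq 0$ iff $\alpha\le 1/(2\cos(\pi/n))$, the minimum being attained at $k=(n\pm 1)/2$ because $n$ is odd. Setting $\alpha=1/(2\cos(\pi/n))$ and factoring $H=V^T V$ produces unit vectors $\bfv_1,\ldots,\bfv_n$ that form an orthonormal representation of $C_n$: indices $i,j$ nonadjacent in $C_n$ satisfy $\bfv_i^T\bfv_j=H_{ij}=0$. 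Using the cyclic symmetry, I choose the handle $\bfc$ proportional to $\sum_i\bfv_i$, so every inner product $\bfc^T\bfv_i$ collapses to the common value $\sqrt{(1+2\alpha)/n}$. This gives
\begin{align*}
\vartheta(C_n)\;\le\;\max_i\frac{1}{(\bfc^T\bfv_i)^2}\;=\;\frac{n}{1+2\alpha}\;=\;\frac{n\cos(\pi/n)}{1+\cos(\pi/n)}.
\end{align*}

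\emph{Matching lower bound.} To close the gap, I would invoke Lov\'asz's eigenvalue formula for edge-transitive regular graphs from the same paper~\cite{Lo1979}: for a $d$-regular edge-transitive graph on $N$ vertices with smallest adjacency eigenvalue $\lambda_{\min}$, $\vartheta(G)=-N\lambda_{\min}/(d-\lambda_{\min})$. Since $C_n$ is $2$-regular and edge-transitive with circulant adjacency eigenvalues $\{2\cos(2\pi k/n)\}_{k=0}^{n-1}$, and for odd $n$ the minimum over this set is $-2\cos(\pi/n)$, the formula instantly delivers the target value.

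\emph{Main obstacle.} The argument is largely symmetry-driven and mechanical once one recognises that the optimizer can be taken circulant; the single spot where parity really matters is the eigenvalue bound, where the oddness of $n$ forces the smallest circulant eigenvalue to be $-2\cos(\pi/n)$ rather than $-2$. If one prefers not to cite the edge-transitive formula, the delicate step becomes to exhibit an explicit feasible matrix in the trace-formulation dual SDP for $\vartheta$—again a circulant constructed by character averaging—and verify that its objective value equals $n\cos(\pi/n)/(1+\cos(\pi/n))$, matching the upper bound above.
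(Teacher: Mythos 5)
The paper does not prove this proposition; it is imported verbatim as Corollary 5 of Lov\'asz's 1979 paper, so there is no proof here to compare against. Your blind reconstruction is correct and follows the same route Lov\'asz used. The circulant Gram matrix has eigenvalues $1+2\alpha\cos(2\pi k/n)$, whose minimum over $k=0,\dots,n-1$ for odd $n$ is indeed $1-2\alpha\cos(\pi/n)$ (since $2\pi k/n$ never hits $\pi$, the closest arguments being $\pi\pm\pi/n$), and your handle computation $\bfc^T\bfv_i=\sqrt{(1+2\alpha)/n}$ is right; plugging $\alpha=1/(2\cos(\pi/n))$ recovers the claimed value. The cited edge-transitive formula is Lov\'asz's Theorem 9, which asserts $\vartheta(G)\le -N\lambda_{\min}/(d-\lambda_{\min})$ for $d$-regular $G$ with equality when $G$ is edge-transitive. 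One structural remark: citing that theorem as an equality already delivers $\vartheta(C_n)$ in one step, which makes your explicit orthonormal-representation construction logically redundant; conversely, the inequality clause of Theorem 9 alone gives only an upper bound, so the ``matching lower bound'' really does hinge on the edge-transitivity equality clause (or on the SDP-dual / vertex-transitivity route you sketch as an alternative). Either way the argument is sound.
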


In particular, for the pentagon graph, Lov\'asz proved that $\Theta(C_5)\le\vartheta(C_{5})=\sqrt{5}$, which meets the lower bound given by Shannon~\cite{Sh1956}.
Also, there exists the following duality between $G$ and its complementary graph $\overline{G}$.

\begin{proposition}\cite[Theorem 5]{Lo1979}
  \label{LovaszThetaDuality}
  Let $\bfd$ range over all unit vectors and let $\{\bfu_1,\dots,\bfu_m\}$ range over all orthonormal representations of $\overline{G}$. Then
  \begin{align}\label{LovaszDual}
    \vartheta(G)=\max_{\bfd,\bfu_i}\sum_{i=1}^{m}(\bfd^T\bfu_i)^2.
  \end{align}
\end{proposition}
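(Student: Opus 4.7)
The plan is to prove the equality via two inequalities. The easy direction $\vartheta(G)\ge\max_{\bfd,\bfu_i}\sum_i(\bfd^T\bfu_i)^2$ follows from a tensor-product Bessel argument, while the reverse is strong duality for a pair of semidefinite programs.

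For the easy direction, fix an orthonormal representation $\{\bfv_1,\dots,\bfv_m\}$ of $G$ with unit handle $\bfc$ and an orthonormal representation $\{\bfu_1,\dots,\bfu_m\}$ of $\overline{G}$ with unit handle $\bfd$. The $m$ vectors $\bfv_i\otimes\bfu_i$ are each of unit norm, and for $i\ne j$ the pair $(i,j)$ is either nonadjacent in $G$, forcing $\bfv_i^T\bfv_j=0$, or nonadjacent in $\overline{G}$, forcing $\bfu_i^T\bfu_j=0$; in either case $(\bfv_i\otimes\bfu_i)^T(\bfv_j\otimes\bfu_j)=(\bfv_i^T\bfv_j)(\bfu_i^T\bfu_j)=0$. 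Bessel's inequality applied to the unit vector $\bfc\otimes\bfd$ then yields $\sum_i(\bfc^T\bfv_i)^2(\bfd^T\bfu_i)^2\le 1$, and bounding each $(\bfc^T\bfv_i)^2$ from below by $1/\max_j(\bfc^T\bfv_j)^{-2}$ gives $\sum_i(\bfd^T\bfu_i)^2\le\max_j(\bfc^T\bfv_j)^{-2}$. Minimizing the right side over $(\bfc,\bfv_i)$ and maximizing the left side over $(\bfd,\bfu_i)$ completes this half.

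For the reverse direction, I would recast both sides as SDPs and invoke strong duality. One first shows $\vartheta(G)=\min\{\lambda_{\max}(A):A=A^T,\ A_{ii}=1,\ A_{ij}=1\text{ for }ij\notin E(G)\}$: from feasible $A$ with $\lambda_{\max}(A)=t$, the rows of a Cholesky factor of $tI-A$ lift (after appending a constant coordinate for the handle $\bfc$) to an orthonormal representation of $G$ realizing $t$, while any orthonormal representation with handle yields such an $A$ by a symmetric orthogonal-complement construction. The conic dual of this SDP is $\max\{\operatorname{tr}(JB):B\succeq 0,\ \operatorname{tr}(B)=1,\ B_{ij}=0\text{ for }ij\in E(G)\}$, which in turn equals $\max_{\bfd,\bfu_i}\sum_i(\bfd^T\bfu_i)^2$: given feasible $B=Y^TY$ with columns $\bfw_i$, take $\bfu_i=\bfw_i/\|\bfw_i\|$ (an orthonormal representation of $\overline{G}$ by the zero pattern of $B$) and $\bfd=\sum_i\bfw_i/\|\sum_i\bfw_i\|$, so that Cauchy-Schwarz applied to the identity $\sum_i\|\bfw_i\|\cdot(\bfd^T\bfu_i)=\|\sum_i\bfw_i\|$, together with $\sum_i\|\bfw_i\|^2=1$, yields $\sum_i(\bfd^T\bfu_i)^2\ge\|\sum_i\bfw_i\|^2=\operatorname{tr}(JB)$; a mirror construction gives the reverse. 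Both SDPs are strictly feasible, so strong duality closes the loop. The main obstacle is precisely this SDP reformulation of $\vartheta(G)$ itself---the bidirectional translation between orthonormal representations with a handle and feasible eigenvalue-SDP matrices---while degenerate cases (zero-norm $\bfw_i$, handles outside the column span) are routine technicalities handled by enlarging the ambient Euclidean space.
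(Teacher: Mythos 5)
The paper states Proposition~\ref{LovaszThetaDuality} as a direct citation of Lov\'asz's Theorem~5 and supplies no proof of its own, so there is no in-paper argument to compare against; what you have written is in effect a reconstruction of Lov\'asz's original derivation, and it is essentially correct. The easy direction---viewing $\{\bfv_i\otimes\bfu_i\}$ as an orthonormal system (for $i\neq j$ either $\bfv_i\perp\bfv_j$ or $\bfu_i\perp\bfu_j$) and applying Bessel's inequality to $\bfc\otimes\bfd$---is exactly Lov\'asz's argument, and your passage to $\sum_i(\bfd^T\bfu_i)^2\le\max_j(\bfc^T\bfv_j)^{-2}$ is clean (if some $\bfc^T\bfv_j=0$ the right side is $+\infty$ and there is nothing to prove). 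The hard direction follows Lov\'asz's Theorems~3 and~4 in SDP language: $\vartheta(G)=\min\lambda_{\max}(A)$ over symmetric $A$ with ones on the diagonal and on non-edges, its conic dual $\max\{\tr(JB):B\succeq 0,\ \tr(B)=1,\ B_{ij}=0 \text{ for } ij\in E(G)\}$ with $J$ the all-ones matrix, and a Gram-factor translation of the dual into the orthonormal-representation form. Your Cholesky-plus-appended-coordinate map from a feasible $A$ to a representation of $G$, and your Cauchy--Schwarz step deriving $\tr(JB)\le\sum_i(\bfd^T\bfu_i)^2$ from $B=Y^TY$, are both right, and the strict-feasibility check you invoke is available (take $A=J$ with $t>m$ on the primal side and $B=I/m$ on the dual side).

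Two steps are asserted rather than argued and should be spelled out. First, the converse half of the eigenvalue reformulation of $\vartheta(G)$---from an orthonormal representation $\{\bfv_i\}$ with handle $\bfc$ achieving value $t$ to a feasible $A$ with $\lambda_{\max}(A)\le t$---is the real content of Lov\'asz's Theorem~3, and ``a symmetric orthogonal-complement construction'' does not identify it. One correct version: after flipping signs so that $\bfc^T\bfv_i>0$, set $\bfw_i=\bfv_i/(\bfc^T\bfv_i)-\bfc$; then $\bfw_i^T\bfw_j=-1$ for nonadjacent $i\ne j$ and $\bfw_i^T\bfw_i=(\bfc^T\bfv_i)^{-2}-1\le t-1$, so defining $A_{ii}=1$ and $A_{ij}=-\bfw_i^T\bfw_j$ for $i\ne j$ gives $tI-A=\bigl[(t-1)I-\diag(\bfw_i^T\bfw_i)\bigr]+(\bfw_i^T\bfw_j)_{i,j}\succeq 0$. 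Second, the ``mirror construction'' giving $\sum_i(\bfd^T\bfu_i)^2\le\max\tr(JB)$ should be written: set $\bfw_i=(\bfd^T\bfu_i)\,\bfu_i\big/\bigl(\sum_j(\bfd^T\bfu_j)^2\bigr)^{1/2}$, check that the Gram matrix $B$ is feasible, and then $\tr(JB)=\bigl\|\sum_i\bfw_i\bigr\|^2\ge\bigl(\bfd^T\sum_i\bfw_i\bigr)^2=\sum_i(\bfd^T\bfu_i)^2$ by Cauchy--Schwarz against the unit vector $\bfd$. With those two constructions made explicit, the outline closes correctly.
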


For two graphs $G=(V(G),E(G))$ and $H=(V(H),E(H))$, their \emph{disjoint union}, denoted as $G+H$, is the graph whose vertex set is the disjoint union of $V(G)$ and $V(H)$ and whose edge set is the disjoint union of $E(G)$ and $E(H)$.  
The Lov\'asz theta number is multiplicative with respect to the strong product,
and additive with respect to the disjoint union.
\begin{proposition}\label{prop:LovaszProperty}
  \ \\[-5mm]
  \begin{enumerate}
    \item 
  \cite[Theorem 7]{Lo1979}
  $\vartheta(G\boxtimes H)=\vartheta(G)\cdot\vartheta(H)$.
\item
  \cite[Section 18]{Knuth1994}
  $\vartheta(G+H)=\vartheta(G)+\vartheta(H)$.
  \end{enumerate}
\end{proposition}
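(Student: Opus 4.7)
The plan is to prove both identities by establishing matching inequalities: the $\le$ direction from the primal (min) definition of $\vartheta$, and the $\ge$ direction from the dual (max) characterization in Proposition~\ref{LovaszThetaDuality}.

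For part~1, both inequalities rest on the same tensor-product construction. Given orthonormal representations $\{\bfv_i\}$ of $G$ and $\{\bfw_j\}$ of $H$ with unit handles $\bfc_G$ and $\bfc_H$, I will verify that $\{\bfv_i\otimes\bfw_j\}$ is an orthonormal representation of $G\boxtimes H$ with unit handle $\bfc_G\otimes\bfc_H$: any distinct non-adjacent pair $(i,j)\ne(i',j')$ in $G\boxtimes H$ must have either $i,i'$ distinct and non-adjacent in $G$ or $j,j'$ distinct and non-adjacent in $H$, so one of the two factors of $(\bfv_i^T\bfv_{i'})(\bfw_j^T\bfw_{j'})$ vanishes. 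Since $((\bfc_G\otimes\bfc_H)^T(\bfv_i\otimes\bfw_j))^2=(\bfc_G^T\bfv_i)^2(\bfc_H^T\bfw_j)^2$ splits multiplicatively, optimizing yields $\vartheta(G\boxtimes H)\le\vartheta(G)\vartheta(H)$. The reverse inequality is obtained by applying the identical construction to orthonormal representations of $\overline{G}$ and $\overline{H}$: one checks that the resulting tensors form an orthonormal representation of $\overline{G\boxtimes H}$ (any two distinct vertices adjacent in $G\boxtimes H$ are strictly adjacent in at least one factor, which again kills one inner-product factor), and Proposition~\ref{LovaszThetaDuality} closes the loop.

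Part~2 is slightly subtler. For $\vartheta(G+H)\le\vartheta(G)+\vartheta(H)$ I embed optimal orthonormal representations of $G$ and $H$ in orthogonal subspaces of $\RR^{d_1+d_2}$ as $(\bfv_i,\mathbf{0})$ and $(\mathbf{0},\bfw_j)$, take the handle $(\alpha\bfc_G,\beta\bfc_H)$ with $\alpha^2+\beta^2=1$, and balance $\vartheta(G)/\alpha^2=\vartheta(H)/\beta^2$ to land exactly on $\vartheta(G)+\vartheta(H)$. The $\ge$ direction is where I expect the main obstacle, since the mirror direct-sum construction applied to $\overline{G+H}$ only reaches $\max(\vartheta(G),\vartheta(H))$ after optimizing the handle. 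The key insight is that in $\overline{G+H}$ every $G$-vertex is adjacent to every $H$-vertex, so an orthonormal representation of $\overline{G+H}$ imposes \emph{no} orthogonality constraint between the two families, and both representations may be placed in a common ambient space $\RR^d$. After independently rotating the optimal representations of $\overline{G}$ and $\overline{H}$ by orthogonal transformations so that both handles $\bfd_G,\bfd_H$ map to a single unit vector $\bfd$, Proposition~\ref{LovaszThetaDuality} evaluates at this joint handle to $\sum_i(\bfd^T\bfu_i)^2+\sum_j(\bfd^T\bfx_j)^2=\vartheta(G)+\vartheta(H)$, as required.
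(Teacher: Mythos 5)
The paper offers no proof of this proposition: both parts are stated as citations (Lov\'asz's Theorem~7 for the product identity, Knuth's ``Sandwich Theorem'' for the additivity), so there is no in-paper argument to compare against. Your proof is correct and essentially reproduces the classical arguments. Part~1 is exactly Lov\'asz's tensor-product construction, used in both directions via the primal definition and the dual characterization of Proposition~\ref{LovaszThetaDuality}; the verification that non-adjacency in $G\boxtimes H$ forces a vanishing factor, and that adjacency in $G\boxtimes H$ forces strict adjacency in at least one coordinate, are both stated accurately. For Part~2 the $\le$ direction by orthogonal direct sum and handle balancing with $\alpha^2+\beta^2=1$ is standard and you compute the optimum correctly as $\vartheta(G)+\vartheta(H)$. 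The $\ge$ direction is the one genuinely nontrivial step, and you identify the key structural fact precisely: since $\overline{G+H}$ is the join of $\overline G$ and $\overline H$, cross-pairs impose no orthogonality, so one may superpose the two families in a common ambient space and rotate so that the two optimal handles coincide (orthogonal maps preserve both the orthogonality constraints and the handle inner products). Invoking Proposition~\ref{LovaszThetaDuality} then gives $\vartheta(G+H)\ge\vartheta(G)+\vartheta(H)$. No gaps.
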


\subsection{Haemers Minimum Rank Bound}\label{Minrank}
Haemers~\cite{Ha1979,Ha1981} proved a very useful upper bound based on the matrix rank as follows.
An $m\times m$ matrix $B$ over some field is said to \emph{fit} $G$ if $B_{ii}\ne0$ for $1\le i\le m$, and $B_{ij}=0$ whenever vertices $i$ and $j$ are nonadjacent for $1\le i,j\le m$ and $i\ne j$. Let $B^{\otimes n}$ denote the Kronecker product of $n$ copies of $B$. It is easy to verify that if $B$ fits $G$, then $B^{\otimes n}$ fits $G^n$.

\begin{theorem}\cite{Ha1981}
  \label{HaemersBound}
  If a matrix $B$ fits a graph $G$, then $\Theta(G)\le\rank(B)$.
\end{theorem}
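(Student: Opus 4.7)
The plan is to establish the bound in three short steps: (i) show $\alpha(G) \le \rank(B)$ whenever $B$ fits $G$; (ii) observe that this inequality, combined with the tensor behavior of $\rank$, gives $\alpha(G^n) \le \rank(B)^n$; (iii) take $n$-th roots and pass to the limit via the definition~\eqref{eq:shannon_cap_def}.

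For step (i), I would let $I=\{i_1,\dots,i_k\}\subseteq V(G)$ be a maximum independent set, so $k=\alpha(G)$, and consider the principal submatrix $B_I:=B[I,I]$. By the definition of ``fits'': for distinct $i_s,i_t\in I$ the vertices are non-adjacent (since $I$ is independent), so $(B_I)_{st}=0$; and $(B_I)_{ss}=B_{i_s i_s}\ne 0$. Hence $B_I$ is diagonal with nonzero diagonal, so $\rank(B_I)=k$. Since the rank of a principal submatrix is at most the rank of the whole matrix, $\alpha(G)=k\le\rank(B)$.

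For step (ii), the excerpt already notes that if $B$ fits $G$ then $B^{\otimes n}$ fits $G^n$. Applying step (i) to the pair $(G^n,B^{\otimes n})$ yields
\begin{align*}
\alpha(G^n)\;\le\;\rank(B^{\otimes n})\;=\;\rank(B)^n,
\end{align*}
where the last equality is the standard multiplicativity of rank under the Kronecker product.

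For step (iii), take $n$-th roots to get $\sqrt[n]{\alpha(G^n)}\le \rank(B)$ for every $n$, and then invoke the supremum (equivalently, limit) in~\eqref{eq:shannon_cap_def} to conclude $\Theta(G)\le\rank(B)$. There is no real obstacle: the only substantive content is the principal-submatrix argument in step (i), and everything else is bookkeeping on tensor powers. The one point to be careful about is that the argument works over any field, because diagonal matrices with nonzero diagonal entries are full rank in any field, and Kronecker-product rank multiplicativity likewise holds over any field — which is what makes $\minrk_\FF(G)$ a legitimate upper bound regardless of the choice of $\FF$.
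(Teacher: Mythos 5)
The paper states Theorem~\ref{HaemersBound} as a citation to Haemers' original work and does not include a proof, only the remark that $B^{\otimes n}$ fits $G^n$; your argument is the standard proof from the cited source and it is correct. The principal-submatrix step, the use of rank multiplicativity under Kronecker product, and the passage to the limit are all exactly as in Haemers, and your remark that everything holds over an arbitrary field is the right thing to flag, since that is what makes $\minrk_{\FF}$ field-dependent yet $\minrk$ still a valid bound.
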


For a graph $G$, Haemers~\cite{Ha1981} introduced the following graph invariant 
\begin{align*}
  \minrk(G):=\min\{\rank(B): B\textup{ fits }G\},
\end{align*}
where the minimization is taken over all fields.
By Theorem~\ref{HaemersBound} it follows that $\Theta(G)\le\minrk(G)$. Moreover, for a fixed field $\FF$, define
\begin{align*}
  \minrk_{\FF}(G):=\min\{\rank(B): B\textup{ over $\FF$ fits }G\}.
\end{align*}
It is easy to verify that $\minrk_{\FF}$ is submultiplicative with respect to the strong product and additive with respect to the disjoint union, i.e., for any two graphs $G$ and $H$,
\begin{align*}
\minrk_{\FF}(G\boxtimes H)&\le\minrk_{\FF}(G)\cdot\minrk_{\FF}(H),\\
\minrk_{\FF}(G + H)&=\minrk_{\FF}(G)+\minrk_{\FF}(H).
\end{align*}
The following example is provided by Haemers~\cite{Ha1979} to answer some problems raised in~\cite{Lo1979}.

\begin{example}{\rm\cite{Ha1979}\label{ex:Schlaefli}
    Let $G$ be the complement of the Schl\"afli graph, which is the unique strongly regular 
    graph\footnote{A strongly regular graph with parameters $(v,k,\lambda,\mu)$ is a regular graph with $v$ vertices and degree $k$
    such that every two adjacent vertices have $\lambda$ common neighbours, and
every two non-adjacent vertices have $\mu$ common neighbours.} 
    with parameters $(27,16,10,8)$. Let $A$ be the adjacency matrix of $G$, and let $I$ be the identity matrix of order $27$. 
    Then the matrix $A-I$ fits the graph $G$, and its rank over $\RR$ is equal to $7$. Hence $\minrk(G)\le\minrk_{\RR}(G)\le7<9=\vartheta(G)$. This improves the bound given by the Lov\'asz theta number. Moreover, Tims~\cite[Example 3.8]{Tims2013} proved that $\minrk_{\FF}(G)\ge7$ over any field $\FF$, and therefore $\minrk(G)=7$. Similarly, the rank of the matrix $A-I$ over the field $\FF_{11}$ is also equal to $7$, hence $\minrk_{\FF_{11}}(G)=7$ (this fact will be used in Example~\ref{IndexCodingSchlaefliModify} and Remark~\ref{BukhCox}).
}
\end{example}

\section{A Linear Programming Variation}\label{sec:new bound}
In this section we will prove our main result, providing a new upper bound on the Shannon capacity by a variation of the linear programming bound given in~\eqref{Primal}-\eqref{Dual}.
For a subset $S\subset V(G)$, the \emph{induced subgraph} $G_S$ is the graph whose vertex set is $S$ and whose edge set consists of all of the edges in $E$ that have both endpoints in $S$.

Let $f$ be a real-valued function defined on graphs, and 
let $f^*(G)$ be the optimal value of the following linear program:
\begin{align}\label{PrimalEx}
  \begin{split}
 \textup{maximize } &\sum_{x} w(x) \\ 
 \textup{subject to }  &\sum_{x\in S}w(x)\le f(G_S)\textup{ for each subset $S$ of } V(G),\\ 
  & w(x)\ge 0.
 \end{split}
\end{align}
By duality $f^*(G)$ can also be computed as follows:
\begin{align}\label{DualEx}
  \begin{split}
  \textup{minimize } &\sum_{S} q(S)f(G_S) \\ 
 \textup{subject to }  &\sum_{S\owns x}q(S) \ge 1 \textup{ for each vertex $x$ in $G$},\\
                &q(S)\ge 0.
  \end{split}
\end{align}

\begin{remark}\label{remark:1}
  {\rm
    The non-negative real-valued function $q:2^{V(G)}\rightarrow \RR$ in~\eqref{DualEx}, satisfying that $\sum_{S\owns x}q(S) \ge 1$ for each vertex $x$ in $G$, is called a \emph{fractional cover} of $G$ by 
K\"orner, Pilotto and Simonyi~\cite{KornerPilottoSimonyi2005}. The fractional cover is used to generalize the local chromatic number to provide an upper bound for the Sperner capacity of directed graphs\footnote{The Sperner capacity of directed graphs is a natural generalization of the Shannon capacity of undirected graphs. See~\cite{KornerPilottoSimonyi2005} for definitions of the local chromatic number and the Sperner capacity.}, cf.~\cite[Theorem
6]{KornerPilottoSimonyi2005}. Note that for undirected graphs, the bounds in~\cite{KornerPilottoSimonyi2005} are always no stronger than the fractional independence number, and hence are not useful upper bounds for the Shannon capacity.
}
\end{remark}

By taking $q(S)=1$ for $S=V(G)$ and $q(S)=0$ otherwise, it is readily verified that $f^*(G)\leq f(G)$. In the next two lemmas, we show that if the function $f$ satisfies certain properties, then these properties are also inherited by $f^*$. We say that $f$ is an upper bound on the independence number if $\alpha(G)\leq f(G)$ for any graph $G$. 

\begin{lemma}\label{UpperBounded}
  If $f$ is an upper bound on the independence number, then so is $f^*$.
\end{lemma}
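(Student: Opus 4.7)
The plan is to exhibit a feasible solution for the primal linear program~\eqref{PrimalEx} whose objective value equals $\alpha(G)$, which will immediately give $\alpha(G) \leq f^*(G)$.

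First I would let $I \subseteq V(G)$ be a maximum independent set, so $|I| = \alpha(G)$, and define the candidate weighting $w$ to be the indicator function of $I$, i.e.\ $w(x) = 1$ if $x \in I$ and $w(x) = 0$ otherwise. The objective value under this $w$ is $\sum_x w(x) = |I| = \alpha(G)$, so it remains only to verify feasibility.

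Next I would check the constraint $\sum_{x \in S} w(x) \le f(G_S)$ for an arbitrary subset $S \subseteq V(G)$. The left side equals $|I \cap S|$. Since $I$ is independent in $G$, the intersection $I \cap S$ is independent in the induced subgraph $G_S$, so $|I \cap S| \le \alpha(G_S)$. By the hypothesis that $f$ is an upper bound on the independence number, $\alpha(G_S) \le f(G_S)$, and chaining these inequalities gives the required bound. The non-negativity constraint $w(x) \ge 0$ is obvious. Hence $w$ is feasible and $f^*(G) \ge \alpha(G)$.

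There is no real obstacle here beyond noticing that restricting an independent set to any vertex subset yields an independent set of the induced subgraph, which is what links the family of LP constraints to the hypothesis on $f$. The argument uses only the definitions and requires no properties of $f$ beyond $\alpha \le f$.
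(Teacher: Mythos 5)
Your proof is correct, and it is precisely the ``direct'' primal argument that the paper itself acknowledges exists but deliberately sidesteps: the authors remark that the lemma ``can be proved directly using the primal linear program~\eqref{PrimalEx}'' and then instead present a proof via the dual LP~\eqref{DualEx}. Your approach exhibits a witness for the primal: the indicator vector of a maximum independent set $I$, which is feasible because $|I \cap S| \le \alpha(G_S) \le f(G_S)$ for every $S$, giving $f^*(G) \ge \alpha(G)$ immediately. The paper's approach instead takes an arbitrary fractional cover $q$ and an arbitrary independent set $\Gamma$, and double-counts to show $\sum_S q(S) f(G_S) \ge \sum_S q(S)\,\alpha(G_{S\cap\Gamma}) = \sum_{x\in\Gamma}\sum_{S\ni x} q(S) \ge |\Gamma|$, so every feasible dual objective is at least $\alpha(G)$. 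The two arguments are dual to each other in the obvious sense; yours is slightly shorter and arguably more elementary, while the paper's counting formulation is stylistically consistent with how fractional covers are manipulated elsewhere (e.g.\ in Lemma~\ref{Submultiplicative} and Theorem~\ref{IndexCodingNewBound}, which both work on the dual side).
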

\begin{proof}
  This result can be proved directly using the primal linear program~\eqref{PrimalEx}. However we would like to present a different proof using the dual linear program~\eqref{DualEx} and a counting argument as follows.
  Let $\Gamma$ be any independent set in $G$ and $q$ a fractional cover of $G$. Since $f(G_S)\ge\alpha(G_S)$, we have
\begin{align*}
  \sum_{S}q(S)f(G_S) 
  &\ge \sum_{S}q(S)\alpha(G_S) \\
  &\ge \sum_{S}q(S)\alpha(G_{S\cap\Gamma}) \\
  &= \sum_{x\in\Gamma}\sum_{S\owns x}q(S)\\
  &\ge |\Gamma|.
\end{align*}
This proves the result.
\end{proof}

We say that $f$ is \emph{submultiplicative} (with respect to the strong product) if for any two graphs $G$ and $H$, $f(G\boxtimes H)\le f(G)f(H)$.

\begin{lemma}
  \label{Submultiplicative}
  If $f$ is submultiplicative, then so is $f^*$.
\end{lemma}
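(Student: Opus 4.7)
My plan is to work with the dual linear program~\eqref{DualEx} and build a fractional cover of $G\boxtimes H$ as a product of optimal fractional covers for $G$ and $H$. Concretely, let $q_G$ and $q_H$ be optimal solutions to~\eqref{DualEx} for $G$ and $H$, so that $\sum_S q_G(S)f(G_S)=f^*(G)$ and $\sum_T q_H(T)f(H_T)=f^*(H)$. I would then define a function $q$ on subsets of $V(G)\times V(H)=V(G\boxtimes H)$ by
\begin{align*}
q(R)=\begin{cases} q_G(S)\,q_H(T) & \text{if } R=S\times T \text{ for some } S\subseteq V(G),\ T\subseteq V(H),\\ 0 & \text{otherwise.}\end{cases}
\end{align*}

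The first step is to check feasibility: for any vertex $(x,y)\in V(G\boxtimes H)$,
\begin{align*}
\sum_{R\ni (x,y)} q(R)=\Bigl(\sum_{S\ni x} q_G(S)\Bigr)\Bigl(\sum_{T\ni y} q_H(T)\Bigr)\ge 1\cdot 1=1,
\end{align*}
so $q$ is a valid fractional cover of $G\boxtimes H$.

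The key structural observation is that the induced subgraph of the strong product on a product set is the strong product of the induced subgraphs, i.e.\ $(G\boxtimes H)_{S\times T}=G_S\boxtimes H_T$, which is immediate from the definition of $\boxtimes$. Combining this with the submultiplicativity of $f$ and the fact that $f(G_S),f(H_T)\ge 0$ (as upper bounds on the independence number, though in any case we may assume nonnegativity since $f^*$ is unchanged by replacing $f$ by $\max(f,0)$), I get
\begin{align*}
f^*(G\boxtimes H)&\le \sum_{S,T} q_G(S)q_H(T)\, f\bigl((G\boxtimes H)_{S\times T}\bigr)\\
&= \sum_{S,T} q_G(S)q_H(T)\, f(G_S\boxtimes H_T)\\
&\le \sum_{S,T} q_G(S)q_H(T)\, f(G_S)f(H_T)\\
&= \Bigl(\sum_S q_G(S)f(G_S)\Bigr)\Bigl(\sum_T q_H(T)f(H_T)\Bigr)=f^*(G)\,f^*(H).
\end{align*}

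There is no real obstacle here; the proof is essentially a tensor-product argument for dual LPs. The only subtlety worth flagging is the simple identity $(G\boxtimes H)_{S\times T}=G_S\boxtimes H_T$, which is what allows submultiplicativity of $f$ to be invoked inside the sum, and the (mild) need for $f$ to take nonnegative values on induced subgraphs so that the factored bound remains an upper bound when multiplied termwise.
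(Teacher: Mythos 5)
Your proof is correct and follows essentially the same route as the paper: build the product dual solution $q(S\times T)=q_G(S)q_H(T)$, verify feasibility, and apply $(G\boxtimes H)_{S\times T}=G_S\boxtimes H_T$ together with submultiplicativity of $f$. One small note: the nonnegativity caveat is unnecessary, since $q_G(S)q_H(T)\ge 0$ already lets the termwise bound $f(G_S\boxtimes H_T)\le f(G_S)f(H_T)$ pass through the weighted sum and the final factoring is a purely algebraic identity; moreover, the proposed fallback of replacing $f$ by $\max(f,0)$ could actually break submultiplicativity (e.g.\ $f(G_S)=f(H_T)=-2$, $f(G_S\boxtimes H_T)=3$), so it is fortunate that step is not needed.
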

\begin{proof}
  Let $q_1$ and $q_2$ be optimal solutions of the linear program~\eqref{DualEx} for $G$ and $H$ respectively. Now we assign weights $q(W)$ to each subset $W$ of $V(G\boxtimes H)$ as follows: if $W=S\times T$ for some $S\subset V(G)$ and $T\subset V(H)$, then we set $q(W)=q(S\times T)=q_1(S)q_2(T)$; otherwise we set $q(W)=0$.  Then for each vertex $(x,y)$ in $V(G\boxtimes H)$, we have
  \begin{align*}
    \sum_{W\owns (x,y)} q(W) &= \sum_{S\owns x, T\owns y}q(S\times T)\\
    &=\sum_{S\owns x}q_1(S)\sum_{T\owns y}q_2(T)\\
    &\ge1.
  \end{align*}
So $q$ is a feasible solution for~\eqref{DualEx}, and
\begin{align*}
  f^*(G\boxtimes H) \le&\sum_{W}q(W)f((G\boxtimes H)_{W})\\
  =&\sum_{S,T}q(S\times T)f((G\boxtimes H)_{S\times T})\\
  =&\sum_{S,T}q_{1}(S)q_{2}(T)f(G_S\boxtimes H_T)\\
  \le& \sum_{S}q_1(S)f(G_S)\sum_{T}q_2(T)f(H_{T})\\
  =&f^*(G) \cdot f^*(H)
\end{align*}
in which the second inequality follows from the submultiplicativity of $f$.
This proves the result.
\end{proof}

Now we can prove the following upper bound on the Shannon capacity.
\begin{theorem}\label{NewBound}
Let $f$ be a submultiplicative upper bound on the independence number. Then, $$\Theta(G)\le f^*(G).$$
\end{theorem}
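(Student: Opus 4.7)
The plan is straightforward: combine the two lemmas just proved with the definition of the Shannon capacity. The key observation is that Lemma~\ref{UpperBounded} and Lemma~\ref{Submultiplicative} together promote the hypotheses on $f$ to the same hypotheses on $f^*$, which is exactly what we need to control $\alpha(G^n)$ uniformly in $n$.

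Concretely, I would proceed as follows. First, since $f$ is an upper bound on the independence number, Lemma~\ref{UpperBounded} yields $\alpha(H) \le f^*(H)$ for every graph $H$; applying this with $H = G^n$ gives $\alpha(G^n) \le f^*(G^n)$. Second, since $f$ is submultiplicative, Lemma~\ref{Submultiplicative} yields that $f^*$ is submultiplicative, and by iteration on the definition $G^n = G^{n-1} \boxtimes G$ one obtains $f^*(G^n) \le f^*(G)^n$. Chaining these two inequalities gives
\begin{align*}
  \alpha(G^n) \le f^*(G)^n,
\end{align*}
so $\sqrt[n]{\alpha(G^n)} \le f^*(G)$ for every $n \ge 1$.

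Finally, taking the supremum (or limit) over $n$ and invoking the definition~\eqref{eq:shannon_cap_def} of $\Theta(G)$ concludes the proof: $\Theta(G) = \sup_n \sqrt[n]{\alpha(G^n)} \le f^*(G)$.

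There is no real obstacle here; the content of the theorem has been absorbed entirely into the two preceding lemmas, and what remains is only the standard packaging argument that turns ``submultiplicative upper bound on $\alpha$'' into ``upper bound on $\Theta$.'' The one small point worth stating cleanly in the write-up is the inductive step $f^*(G^n) \le f^*(G)^n$, which follows by $n-1$ applications of submultiplicativity of $f^*$ to $G^{k} = G^{k-1} \boxtimes G$.
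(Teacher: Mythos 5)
Your proof is correct and follows exactly the same route as the paper: apply Lemma~\ref{UpperBounded} to get $\alpha(G^n)\le f^*(G^n)$, apply Lemma~\ref{Submultiplicative} iteratively to get $f^*(G^n)\le f^*(G)^n$, and take $n$-th roots. The paper states this in a single line; you merely unpack the same steps.
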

\begin{proof}
  By Lemma~\ref{UpperBounded} and Lemma~\ref{Submultiplicative}, we get $\alpha(G^n)\le f^*(G^n)\le f^*(G)^n$.
\end{proof}

Any function $f$ that is a submultiplicative upper bound on the independence number forms by itself an upper bound on the Shannon capacity, i.e., $\Theta(G)\leq f(G)$. Combining this with Theorem~\ref{NewBound} and the fact that $f^*(G)\leq f(G)$ we get $\Theta(G) \leq f^*(G) \leq f(G).$ 
Simply put, this chain of inequalities shows that $f^*$
is a bound that is at least as good as the bound $f$ that we started with in the first place. An immediate question is, can we get the strict inequality $f^*(G)<f(G)$? In other words, can we improve the bound $f$ on the Shannon capacity by solving the corresponding linear programming problem? In the sequel, we give an affirmative answer to this question by providing several explicit examples where a strict inequality holds. 
Furthermore, we answer the following two natural questions: 1) which functions $f$ should we use in Theorem~\ref{NewBound}? and 2) do we always get a tighter upper bound for any function $f$?   

Before we proceed to answer those questions, we show some simple properties of $f^*$, which are used later. 
We say that $f$ is \emph{superadditive} with respect to the disjoint union if $f(G+H)\ge f(G)+f(H)$ for any two graphs $G$ and $H$.
\begin{proposition}\label{prop:properties}
  \ \\[-5mm]
  \begin{enumerate}
    \item If $f(C)=1$ for each clique $C$ in $G$, then $f^*(G)\le\alpha_f(G)$. In particular, $\minrk_{\FF}^*(G)\le\alpha_f(G)$.
    \item $f^*(G+H)\le f^*(G)+f^*(H)$.
    \item If $f$ is superadditive, then $f^*(G+H)=f^*(G)+f^*(H)$. In particular, $\minrk_{\FF}^*(G+H)=\minrk_{\FF}^*(G)+\minrk_{\FF}^*(H).$
  \end{enumerate}
\end{proposition}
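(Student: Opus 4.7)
The plan is to dispatch the three claims by choosing the more convenient LP formulation for each: constraint-set comparison for~(1), a dual-LP construction for~(2), and a matching primal-LP construction for~(3).

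For (1), I would work with the primal program~\eqref{PrimalEx} directly. Under the hypothesis that $f(C)=1$ for every clique $C$, the constraint of~\eqref{PrimalEx} indexed by $S=C$ reads $\sum_{x\in C}w(x)\le 1$, which coincides with the clique constraint of~\eqref{Primal} defining $\alpha_f(G)$. Since~\eqref{PrimalEx} has all these constraints and possibly more, any $w$ feasible for~\eqref{PrimalEx} is feasible for~\eqref{Primal}, so $f^*(G)\le\alpha_f(G)$. The specialization to $\minrk_{\FF}$ requires only the observation that the all-ones matrix $J$ fits the complete graph $K_n$ (its diagonal is nonzero) and has rank $1$, whence $\minrk_{\FF}(K_n)=1$ and the hypothesis applies.

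For (2), I would pass to the dual~\eqref{DualEx}. Let $q_1,q_2$ be optimal fractional covers for $G,H$. Define $q$ on $2^{V(G+H)}$ by $q(S)=q_1(S)$ for $S\subseteq V(G)$, $q(S)=q_2(S)$ for $S\subseteq V(H)$, and $q(S)=0$ on ``mixed'' subsets meeting both sides. Since $V(G)\cap V(H)=\emptyset$, the cover inequality $\sum_{S\owns x}q(S)\ge 1$ at a vertex $x\in V(G)$ collapses to $\sum_{S\subseteq V(G),\ S\owns x}q_1(S)\ge 1$, which holds by feasibility of $q_1$, and symmetrically on $H$. Since $(G+H)_S=G_S$ for $S\subseteq V(G)$ and $(G+H)_S=H_S$ for $S\subseteq V(H)$, the objective evaluates to $f^*(G)+f^*(H)$, giving the stated inequality. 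Note no structural assumption on $f$ is needed here.

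For (3), the reverse inequality uses the primal side and the superadditivity hypothesis. Let $w_1,w_2$ achieve the optima of~\eqref{PrimalEx} for $G$ and $H$, and define $w$ on $V(G+H)$ by concatenation. For any $W\subseteq V(G+H)$, writing $W_G:=W\cap V(G)$ and $W_H:=W\cap V(H)$, one has
\begin{align*}
\sum_{x\in W}w(x)=\sum_{x\in W_G}w_1(x)+\sum_{x\in W_H}w_2(x)\le f(G_{W_G})+f(H_{W_H})\le f\bigl(G_{W_G}+H_{W_H}\bigr)=f((G+H)_W),
\end{align*}
where the last inequality is the superadditivity of $f$ and the final identity is the obvious fact $(G+H)_W=G_{W_G}+H_{W_H}$. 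Hence $w$ is feasible for $f^*(G+H)$ with value $f^*(G)+f^*(H)$, so $f^*(G+H)\ge f^*(G)+f^*(H)$; combined with~(2) this yields equality. The $\minrk_{\FF}$ consequence follows because $\minrk_{\FF}$ is actually additive (hence superadditive) on disjoint unions, as recorded in Section~\ref{Minrank}. The overall argument is routine LP manipulation; the only point of care is matching each direction to the right LP (dual for the upper bound, primal for the lower), and noticing in~(2) that confining the cover to ``block'' subsets bypasses the need for any assumption on $f$.
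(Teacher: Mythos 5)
Your proposal is correct and follows essentially the same route as the paper: the paper's proof is terse for parts (1) and (2) (``follows directly from~\eqref{PrimalEx}'' / ``from~\eqref{DualEx}''), and for part (3) it constructs exactly the concatenated primal solution you describe and invokes superadditivity to verify feasibility. Your write-up simply fills in the details the paper leaves implicit, including the observation that $\minrk_{\FF}(K_n)=1$ (via the all-ones matrix) needed for the ``in particular'' clause of (1).
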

\begin{proof}
  1) Follows directly from~\eqref{PrimalEx}. 
  2) Follows directly from~\eqref{DualEx}. 
  3) Let $w_1$ and $w_2$ be optimal solutions of the primal linear program~\eqref{PrimalEx} for $G$ and $H$ respectively. We define an assignment $w$ for $G+H$ as follows: $w(x)=w_1(x)$ if $x\in V(G)$ and $w(y)=w_2(y)$ if $y\in V(H)$. By the superadditivity of $f$, we can verify that $w$ is a feasible solution of~\eqref{PrimalEx} for $G+H$, and thus $f^*(G+H)\ge f^*(G)+f^*(H)$. Combining it with 2) proves that $f^*(G+H)=f^*(G)+f^*(H)$. 
  The second equality follows from the fact that $\minrk_{\FF}$ is additive with respect to the disjoint union. 
\end{proof}

\subsection{A New Bound $\minrk_{\FF}^*$}
Now we take $f=\minrk_{\FF}$ which is a submultiplicative upper bound on the Shannon capacity, and show that there exist graphs such that our new bound $\minrk_{\FF}^*$ can outperform both $\minrk$ and Lov\'asz theta number. 
The following three examples show several instances of it.
Example~\ref{ex:C_n} shows a family of graphs where our bound outperforms $\minrk$ but not Lov\'asz theta number.

\begin{example}\label{ex:C_n}
  {\rm
    For odd $n\ge5$, it is not hard to verify that $\minrk(C_n)=\minrk_{\RR}(C_n)=(n+1)/2$. By 1) of Proposition~\ref{prop:properties} we have $\minrk_{\RR}^*(C_n)\le \alpha_f(C_n)=n/2<\minrk(C_n)$.
    If we let $w(x)=1/2$ for every vertex $x$ of $C_n$, then we can readily verify that $\{w(x):x\in V(C_n)\}$ is
    a feasible solution of~\eqref{PrimalEx}. It follows that $\minrk_{\RR}^*(C_n)\ge n/2$, and thus $\minrk_{\RR}^*(C_n)=n/2$.
}
\end{example}

Example~\ref{ex:G+C_n} provides a family of graphs where our bound outperforms simultaneously both $\minrk$ and Lov\'asz theta number, however it might seem a bit artificial since it is a disjoint union of two graphs. 
\begin{example}\label{ex:G+C_n}
  {\rm
    Let $G$ be the complement of the Schl\"afli graph. Then for odd $n\ge5$, by~Proposition~\ref{LovaszThetaNumberOfOddCycles}, Proposition~\ref{prop:LovaszProperty}, and Examples~\ref{ex:Schlaefli}-\ref{ex:C_n},
\begin{align*}
 \vartheta(G+C_n) &= \vartheta(G)+\vartheta(C_n)=9+\frac{n\cos(\pi/n)}{1+\cos(\pi/n)},\\
 \minrk(G+C_n) &= \minrk(G)+\minrk(C_n)=7+\frac{n+1}{2}.
\end{align*}
On the other hand, by 3) of Proposition~\ref{prop:properties},
\begin{align*}
  \minrk_{\RR}^*(G+C_n)=\minrk_{\RR}^*(G)+\minrk_{\RR}^*(C_n)\le 7+\frac{n}{2}.
\end{align*}
Hence $\Theta(G+C_n)\leq \minrk_{\RR}^*(G+C_n) \le 7+\frac{n}{2} < 7+\frac{n+1}{2} = \min\{\minrk(G+C_n),\vartheta(G+C_n)\}$.
}
\end{example}

\begin{figure}[t!]
  \centering
  \includegraphics[scale=0.5]{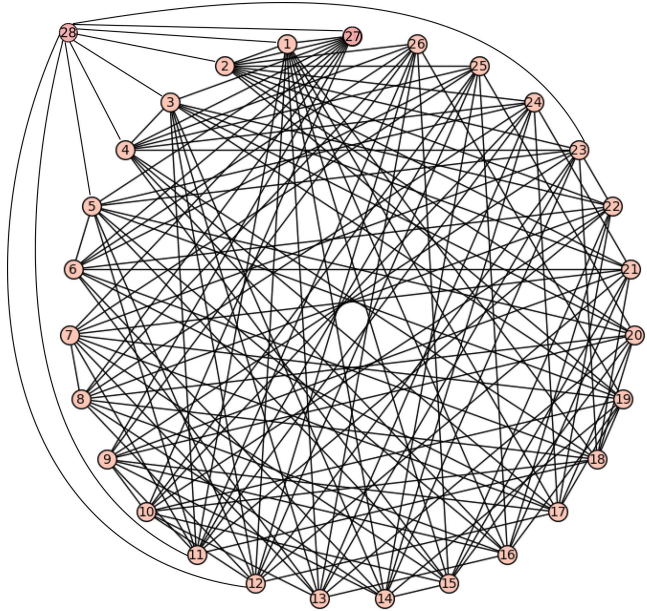}
  \caption{The graph $G$ in Example~\ref{SchlaefliModify}}
\end{figure}

In Example~\ref{SchlaefliModify} we construct a connected graph for which our bound also outperforms both $\minrk$ and Lov\'asz theta number. 
\begin{example}\label{SchlaefliModify}
  {\rm
    Let $G$ be the graph as plotted in Fig. 1. Note that $G_T$, the induced subgraph of $G$ on the vertices $T=\{1,2,\dots,27\}$, 
    is the complement of the Schl\"afli graph, and the vertex $28$ of $G$ is connected to vertices $1,\ldots,5,11,12,23$ and $27$. 
    Using Sagemath~\cite{sagemath} one can verify that $\vartheta(G)=9$. From Proposition~\ref{thm:MinrankOfModifiedSchlaefliGraph} in Appendix we see that $\minrk(G)=8$.
Take $f=\minrk_{\RR}$, and consider the following linear program:
\begin{align}\label{lp:ModifiedSchlaefliGraph}
  \begin{split}
 \textup{maximize } &\sum_{x} w(x) \\ 
 \textup{subject to }  &\sum_{x\in C}w(x)\le \minrk_{\RR}(C)= 1\textup{ for every clique $C$ in $G$},\\ 
 &\sum_{x\in T}w(x)\le \minrk_{\RR}(G_{T})= 7,\\ 
  & w(x)\ge 0.
 \end{split}
\end{align}
Using Sagemath~\cite{sagemath} one can compute that the optimal value of~\eqref{lp:ModifiedSchlaefliGraph} is equal to $71/9$.
Comparing~\eqref{lp:ModifiedSchlaefliGraph} with~\eqref{PrimalEx}, we have $\minrk_{\RR}^*(G)\le 71/9<8=\min\{\minrk(G),\vartheta(G)\}$.
}
\end{example}

The following result shows that we cannot always get a tighter bound through this linear programming variation.
\begin{proposition}\label{prop:minrkEx}
  Fix a field $\FF$. Let $G=(V,E)$ be a graph such that $\minrk_{\FF}(G)<\vartheta(G)$ and for any subset $S\subsetneq V$ we have $\minrk_{\FF}(G_S)\ge \vartheta(G_S)$. Then $\minrk_{\FF}^*(G)=\minrk_{\FF}(G)$.
\end{proposition}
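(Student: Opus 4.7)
The upper bound $\minrk_{\FF}^*(G) \le \minrk_{\FF}(G)$ is the general inequality $f^*\le f$ noted in the text, so my plan focuses on the reverse inequality. I aim to show that every feasible solution $q$ of the dual LP~\eqref{DualEx} for $f=\minrk_{\FF}$ satisfies $\sum_S q(S)\minrk_{\FF}(G_S)\ge \minrk_{\FF}(G)$.

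The engine of the proof is the auxiliary identity $\vartheta^*(H)=\vartheta(H)$ for every graph $H$. One direction is the general $f^*\le f$. For the other, I would invoke Proposition~\ref{LovaszThetaDuality}: fix an optimal unit vector $\bfd$ and orthonormal representation $\{\bfu_1,\dots,\bfu_m\}$ of $\overline H$ with $\vartheta(H)=\sum_i(\bfd^T\bfu_i)^2$. For any $S\subseteq V(H)$, the restricted system $\{\bfu_i:i\in S\}$ is an orthonormal representation of $\overline{H_S}$, and so $\vartheta(H_S)\ge \sum_{i\in S}(\bfd^T\bfu_i)^2$. For any feasible $q$, swapping the order of summation,
\begin{align*}
\sum_S q(S)\vartheta(H_S)\;\ge\;\sum_i(\bfd^T\bfu_i)^2\sum_{S\ni i}q(S)\;\ge\;\sum_i(\bfd^T\bfu_i)^2=\vartheta(H),
\end{align*}
which shows $\vartheta^*(H)\ge \vartheta(H)$, hence equality.

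With this identity in hand, let $q$ be any feasible solution of~\eqref{DualEx} for $\minrk_{\FF}$ on $G$, set $\lambda:=q(V)$, and split
\begin{align*}
\sum_S q(S)\minrk_{\FF}(G_S)\;=\;\lambda\,\minrk_{\FF}(G)+\sum_{S\subsetneq V}q(S)\,\minrk_{\FF}(G_S).
\end{align*}
If $\lambda\ge 1$ there is nothing to do. Otherwise define $q'(S):=q(S)/(1-\lambda)$ for $S\subsetneq V$ and $q'(V):=0$; each vertex $x$ satisfies $\sum_{S\ni x,\,S\neq V}q(S)\ge 1-\lambda$, so $q'$ is a feasible fractional cover of $G$. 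Applying the identity $\vartheta^*(G)=\vartheta(G)$ to $q'$ yields $\sum_{S\subsetneq V}q(S)\vartheta(G_S)\ge (1-\lambda)\vartheta(G)$. Using the hypothesis $\minrk_{\FF}(G_S)\ge \vartheta(G_S)$ for every proper $S\subsetneq V$, and then $\vartheta(G)>\minrk_{\FF}(G)$, I conclude
\begin{align*}
\sum_S q(S)\minrk_{\FF}(G_S)\;\ge\;\lambda\,\minrk_{\FF}(G)+(1-\lambda)\,\vartheta(G)\;\ge\;\minrk_{\FF}(G),
\end{align*}
as required. The main obstacle, and the only non-bookkeeping step, is recognizing and proving the identity $\vartheta^*=\vartheta$ via the dual orthonormal representation; once that is available, the rest is a clean case split on whether the dual mass placed on the whole vertex set already accounts for a full cover, followed by rescaling the off-$V$ portion into a feasible cover on its own.
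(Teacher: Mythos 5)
Your proof is correct and follows essentially the same approach as the paper's: decompose the dual objective at $q(V)$, replace $\minrk_{\FF}(G_S)$ by $\vartheta(G_S)$ on proper subsets via the hypothesis, and invoke the identity $\vartheta^*=\vartheta$ (which the paper isolates as Proposition~\ref{ThetaEx}, proved via the same dual orthonormal representation you use). The only cosmetic differences are that the paper applies $\vartheta^*(G)\le\sum_S q(S)\vartheta(G_S)$ to $q$ directly rather than rescaling to $q'$, and invokes optimality of $q$ to get $q(V)\le 1$ where you instead dispose of the case $q(V)\ge 1$ trivially.
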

\begin{proof}
  By definition $\minrk_{\FF}^*(G)\le\minrk_{\FF}(G)$. Now we show that $\minrk_{\FF}^*(G) \ge \minrk_{\FF}(G)$. Suppose that $\{q(S):S\subset V\}$ is an optimal solution for~\eqref{DualEx}. 
  It is easy to see that $q(V)\le 1$, otherwise $\{q(S):S\subset V\}$ is not an optimal solution for~\eqref{DualEx}. We have
\begin{align*}
  \minrk_{\FF}^*(G) &= \sum_{S}q(S)\minrk_{\FF}(G_S)\\
  &= q(V)\minrk_{\FF}(G)+\sum_{S\subsetneq V}q(S)\minrk_{\FF}(G_S)\\
  &\ge q(V)\minrk_{\FF}(G)+\sum_{S\subsetneq V}q(S)\vartheta(G_S).
\end{align*}
By Proposition~\ref{ThetaEx} below we have 
$$\vartheta(G)=\vartheta^*(G)\le q(V)\vartheta(G)+\sum_{S\subsetneq V}q(S)\vartheta(G_S).$$
Hence  
\begin{align*}
  \minrk_{\FF}^*(G) 
  \ge q(V)\minrk_{\FF}(G)+(1-q(V))\vartheta(G)
  \ge\minrk_{\FF}(G).
\end{align*}
This concludes our proof.
\end{proof}

The following example shows that there exist graphs satisfying the conditions of Proposition~\ref{prop:minrkEx}.
\begin{example}{\rm
  Fix a field $\FF$. Let $G$ be a graph such that $\minrk_{\FF}(G)<\vartheta(G)$.
  If $\minrk_{\FF}(G_S)\ge \vartheta(G_S)$ for any subset $S\subsetneq V(G)$, 
  then $\minrk_{\FF}^*(G)=\minrk_{\FF}(G)$ by Proposition~\ref{prop:minrkEx}.
  Otherwise, let $S$ be a subset of $V(G)$ with the smallest size among those subsets 
  such that $\minrk_{\FF}(G_S)<\vartheta(G_S)$. Obviously, the induced subgraph $G_S$ satisfies
  the conditions of Proposition~\ref{prop:minrkEx}, hence $\minrk_{\FF}^*(G_S)=\minrk_{\FF}(G_S)$.
  (Note that there are many graphs for which $\minrk_{\FF}(G)<\vartheta(G)$, e.g. the complement of the Schl\"afli graph for $\FF=\RR$.) 
}
\end{example}

\subsection{Bounds for Disjoint Union of Graphs}
For the Shannon capacity of the disjoint union of two graphs, we have the following simple observation.
\begin{corollary}
 $\Theta(G+H) \le \min\{\minrk(G)+\alpha_f(H),\alpha_f(G)+\minrk(H)\}.$
\end{corollary}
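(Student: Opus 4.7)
The plan is to apply Theorem~\ref{NewBound} with $f=\minrk_{\FF}$ for an arbitrary field $\FF$, which is a submultiplicative upper bound on the independence number. This immediately yields $\Theta(G+H)\le \minrk_{\FF}^*(G+H)$, and my job reduces to bounding $\minrk_{\FF}^*(G+H)$ appropriately.

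First, since $\minrk_{\FF}$ is additive on disjoint unions (and in particular superadditive), item~3 of Proposition~\ref{prop:properties} gives $\minrk_{\FF}^*(G+H)=\minrk_{\FF}^*(G)+\minrk_{\FF}^*(H)$. Now I would control the two summands by \emph{different} tools. On the one hand, setting $q(V(G))=1$ and $q(S)=0$ otherwise in the dual program~\eqref{DualEx} yields the trivial estimate $\minrk_{\FF}^*(G)\le\minrk_{\FF}(G)$. On the other hand, since $\minrk_{\FF}(C)=1$ for every clique $C$, item~1 of Proposition~\ref{prop:properties} gives $\minrk_{\FF}^*(H)\le\alpha_f(H)$. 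Combining the two produces $\Theta(G+H)\le\minrk_{\FF}(G)+\alpha_f(H)$. Because this holds for every field $\FF$ and only the first term depends on $\FF$, minimizing over $\FF$ yields $\Theta(G+H)\le\minrk(G)+\alpha_f(H)$.

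The completely symmetric argument, obtained by swapping the roles of $G$ and $H$ (that is, bounding $\minrk_{\FF}^*(G)\le\alpha_f(G)$ and $\minrk_{\FF}^*(H)\le\minrk_{\FF}(H)$), gives $\Theta(G+H)\le\alpha_f(G)+\minrk(H)$. Taking the minimum of the two bounds delivers the statement.

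I do not foresee any genuine obstacle here, since every ingredient has been established above. The only subtle point worth flagging is one of \emph{ordering of quantifiers}: the additivity of item~3 of Proposition~\ref{prop:properties} requires a single fixed field $\FF$, so the minimization over $\FF$ must be postponed until after the splitting $\minrk_{\FF}^*(G+H)=\minrk_{\FF}^*(G)+\minrk_{\FF}^*(H)$ and the two different upper bounds have been applied. It is precisely this postponement that allows the bound to feature the field-free quantity $\minrk(G)$ (resp. $\minrk(H)$) rather than a particular $\minrk_{\FF}(G)$.
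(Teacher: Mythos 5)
Your proof is correct and follows essentially the same route as the paper: invoke Theorem~\ref{NewBound} with $f=\minrk_{\FF}$, split via additivity from Proposition~\ref{prop:properties}(3), and bound the two summands by $\minrk_{\FF}^*\le\minrk_{\FF}$ and $\minrk_{\FF}^*\le\alpha_f$ respectively. The only cosmetic difference is that the paper fixes the field $\FF$ achieving $\minrk(G)=\minrk_{\FF}(G)$ at the outset, whereas you carry an arbitrary $\FF$ through the argument and minimize at the end --- the same quantifier discipline you correctly flagged.
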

\begin{proof}
 Suppose $\minrk(G)=\minrk_{\FF}(G)$ for some field $\FF$. By Theorem~\ref{NewBound} and Proposition~\ref{prop:properties},
 we have $\Theta(G+H)\le\minrk_{\FF}^*(G+H)=\minrk_{\FF}^*(G)+\minrk_{\FF}^*(H)\le\minrk(G)+\alpha_f(H)$. Similarly, we can prove that $\Theta(G+H)\le\alpha_f(G)+\minrk(H).$ This concludes the proof.
\end{proof}

Next, we shall combine the Lov\'asz theta number and $\minrk_{\FF}$ through a weighted geometric mean
to get another upper bound on the Shannon capacity of the disjoint union.
Fix a field $\FF$ and suppose $0\le a\le 1$. Then we can easily verify that 
$$\vartheta^a\minrk_{\FF}^{*1-a}(G):=\vartheta(G)^a \minrk^*_{\FF}(G)^{1-a}$$
is also a submultiplicative upper bound on the independence number. 

\begin{corollary}\label{cor:GeometricAverage}
For a fixed field $\FF$ and a number $a\in [0,1]$, 
  $$\Theta(G+H)\le \vartheta^a\minrk_{\FF}^{*1-a}(G)+\vartheta^a\minrk_{\FF}^{*1-a}(H).$$
\end{corollary}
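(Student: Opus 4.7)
The plan is to apply Theorem~\ref{NewBound} to the weighted geometric mean
$$f(G):=\vartheta(G)^a\,\minrk_{\FF}^*(G)^{1-a}$$
and then combine the resulting bound with the subadditivity of $f^*$ under disjoint union from Proposition~\ref{prop:properties}.

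First I will confirm---as the paragraph preceding the corollary already asserts---that $f$ is a submultiplicative upper bound on the independence number. The inequality $\alpha(G)\le f(G)$ is immediate: $\vartheta(G)\ge\alpha(G)$ holds classically, $\minrk_{\FF}^*(G)\ge\alpha(G)$ follows by applying Lemma~\ref{UpperBounded} to $\minrk_{\FF}$, and hence
$$\alpha(G)=\alpha(G)^a\,\alpha(G)^{1-a}\le\vartheta(G)^a\,\minrk_{\FF}^*(G)^{1-a}=f(G).$$
For submultiplicativity I combine the strong-product multiplicativity of $\vartheta$ from Proposition~\ref{prop:LovaszProperty} with the submultiplicativity of $\minrk_{\FF}^*$ obtained by applying Lemma~\ref{Submultiplicative} to $\minrk_{\FF}$, giving
$$f(G\boxtimes H)\le\bigl(\vartheta(G)\vartheta(H)\bigr)^a\bigl(\minrk_{\FF}^*(G)\minrk_{\FF}^*(H)\bigr)^{1-a}=f(G)\,f(H).$$

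With these two properties in place, Theorem~\ref{NewBound} yields $\Theta(G+H)\le f^*(G+H)$; Proposition~\ref{prop:properties}(2) applied to $G+H$ gives $f^*(G+H)\le f^*(G)+f^*(H)$; and the universal inequality $f^*\le f$ (valid by taking $q(V(G))=1$ in~\eqref{DualEx}) upgrades the chain to $\Theta(G+H)\le f(G)+f(H)$, which is precisely the claimed bound. There is no genuine obstacle: the whole argument is stringing together Theorem~\ref{NewBound}, Proposition~\ref{prop:properties}(2) and $f^*\le f$ for the particular choice $f=\vartheta^{a}\minrk_{\FF}^{*\,1-a}$, and the only real content is the routine verification that this geometric mean inherits submultiplicativity and the independence-number bound from its two factors. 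It is worth noting in passing that one cannot shortcut the argument by making $f$ itself subadditive under disjoint union---a Hölder-type calculation shows this generally fails---so routing the bound through $f^*$ and invoking Proposition~\ref{prop:properties}(2) is essential.
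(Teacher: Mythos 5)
Your proof is correct and follows exactly the route the paper takes: apply Theorem~\ref{NewBound} to $f=\vartheta^{a}\minrk_{\FF}^{*\,1-a}$, then use the subadditivity of $f^{*}$ under disjoint union from Proposition~\ref{prop:properties}(2), and finish with the universal inequality $f^{*}\le f$. The only difference is that you spell out the routine verification that this geometric mean is a submultiplicative upper bound on the independence number, which the paper merely asserts in the preceding paragraph.
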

\begin{proof}
As $\vartheta^a\minrk_{\FF}^{*1-a}$ is a submultiplicative upper bound on the independence number, 
by Theorem~\ref{NewBound} we have
\begin{align*}
 \Theta(G+H) &\le (\vartheta^a\minrk_{\FF}^{*1-a})^*(G+H) \\
 &\le (\vartheta^a\minrk_{\FF}^{*1-a})^*(G)+(\vartheta^a\minrk_{\FF}^{*1-a})^*(H) \\
 &\le \vartheta^a\minrk_{\FF}^{*1-a}(G)+\vartheta^a\minrk_{\FF}^{*1-a}(H).
 \end{align*}
 Here the second inequality follows from (2) of Proposition~\ref{prop:properties}.
\end{proof}

\begin{example}\label{G+7C_5}
{\rm
Let $G$ be the complement of the Schl\"afli graph. Consider the graph $H=G+7 C_5$.
It is not hard to verify that $\vartheta(H)=9+7\sqrt{5}$ and $\minrk_{\RR}(H)=28$.
By Corollary~\ref{cor:GeometricAverage}, 
  \begin{align*}
  \Theta(H) 
  &\le \vartheta^a\minrk_{\RR}^{*1-a}(G)+7\cdot \vartheta^a\minrk_{\RR}^{*1-a}(C_5)\\
  &= \vartheta(G)^a \cdot \minrk^*_{\RR}(G)^{1-a}+7\cdot \vartheta(C_5)^{a}\cdot\minrk^*_{\RR}(C_5)^{1-a}\\
  &\le 9^{a}7^{1-a}+7(\sqrt{5})^{a}\left(\frac{5}{2}\right)^{1-a}.
  \end{align*}
For $a=0.287291$ the term $9^{a}7^{1-a}+7(\sqrt{5})^{a}\left(\frac{5}{2}\right)^{1-a}=24.4721$ achieving its minimum value on $[0,1]$. 
Note that this value is strictly better than $\vartheta(H)$ ($a=1$) and $\minrk_{\RR}(H)$.
}
\end{example}

Lastly, if we take $f$ to be the Lov\'asz theta number, our new bound cannot improve it.
\begin{proposition}\label{ThetaEx}
  $\vartheta^*(G)=\vartheta(G)$.
\end{proposition}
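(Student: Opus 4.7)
The plan is to prove the two inequalities $\vartheta^*(G) \le \vartheta(G)$ and $\vartheta^*(G) \ge \vartheta(G)$ separately, the second being the interesting one.

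The easy direction $\vartheta^*(G) \le \vartheta(G)$ has already been noted right before the statement of Lemma~\ref{UpperBounded}: setting $q(V(G)) = 1$ and $q(S) = 0$ for all other $S$ gives a feasible dual solution of~\eqref{DualEx} with objective value $\vartheta(G)$.

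For the nontrivial direction, my strategy is to exhibit an explicit feasible primal solution $w$ of~\eqref{PrimalEx} with $\sum_x w(x) = \vartheta(G)$, by reading off the weights from the dual characterization of the Lovász theta number given in Proposition~\ref{LovaszThetaDuality}. Concretely, let $\bfd$ be a unit vector and $\bfu_1, \dots, \bfu_m$ an orthonormal representation of $\overline{G}$ attaining the maximum, so that $\vartheta(G) = \sum_{i=1}^m (\bfd^T \bfu_i)^2$. Define $w(i) := (\bfd^T\bfu_i)^2 \ge 0$.

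The key observation is that for any subset $S \subseteq V(G)$, the restricted family $\{\bfu_i : i \in S\}$ is automatically an orthonormal representation of $\overline{G_S}$: if $i,j \in S$ are adjacent in $G_S$ then they are adjacent in $G$, hence nonadjacent in $\overline{G}$, hence $\bfu_i \perp \bfu_j$ by the defining property of the representation. Therefore Proposition~\ref{LovaszThetaDuality} applied to $G_S$ yields
\begin{align*}
\sum_{i \in S} w(i) = \sum_{i \in S} (\bfd^T \bfu_i)^2 \le \vartheta(G_S),
\end{align*}
which is precisely the primal feasibility constraint in~\eqref{PrimalEx} for $f = \vartheta$. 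Hence $w$ is feasible with objective value $\vartheta(G)$, proving $\vartheta^*(G) \ge \vartheta(G)$.

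The only step that could be viewed as an obstacle is noticing that the dual formulation of $\vartheta$ from Proposition~\ref{LovaszThetaDuality} gives exactly the right vertex weights, and that restricting an orthonormal representation of $\overline{G}$ to $S$ yields an orthonormal representation of $\overline{G_S}$ (not of some larger graph). Once these two facts are identified the argument is essentially one line; this is why $\vartheta$ is insensitive to the linear programming variation, whereas $\minrk_{\FF}$ can strictly benefit from it.
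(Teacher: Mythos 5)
Your proof is correct and takes exactly the paper's approach: the easy direction by a trivial feasible solution, and the hard direction by using the dual characterization in Proposition~\ref{LovaszThetaDuality} to define $w(i) = (\bfd^T\bfu_i)^2$ and checking feasibility via the observation that restricting an orthonormal representation of $\overline{G}$ to $S$ gives one of $\overline{G_S}$. You simply spell out the restriction step in slightly more detail than the paper does.
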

\begin{proof}
  From the primal linear program~\eqref{PrimalEx} we immediately get $\vartheta^*(G)\le\vartheta(G)$. On the other hand, let $\bfd$ and $\{\bfu_1,\dots,\bfu_m\}$ be an optimal solution for~\eqref{LovaszDual}. For each vertex $i,1\le i\le m$, we set $w(i)=(\bfd^T\bfu_i)^2$. Then for each subset $S$ of $V(G)$, we have 
  $$\sum_{i\in S}w(i)=\sum_{i\in S}(\bfd^T\bfu_i)^2\le\vartheta(G_S)$$
  by Proposition~\ref{LovaszThetaDuality}. Hence $\{w(i):1\le i\le m\}$ is a feasible solution for~\eqref{PrimalEx}, 
  and $\vartheta(G)=\sum_{1\le i\le m}w(i)\le \vartheta^*(G)$.
\end{proof}

\section{A New Upper Bound for Index Coding}\label{sec:IndexCoding}
In this section we show that our technique also allows us to derive a new bound for the Index Coding problem to be defined next. 
In the Index Coding problem, a sender holds a set of messages to be broadcast to a group of receivers.
Each receiver is interested in one of the messages, and has some prior side information
comprising some subset of the other messages. This variant of source coding problem was first proposed in~\cite{BirkKol2006} by Birk and Kol,
and later investigated in~\cite{Bar-Yossefetal2011} by Bar-Yossef et al.

The Index Coding problem can be formalized as follows: the sender holds $m$ messages $x_1,x_2,\ldots,x_m\in\Sigma$ 
where $\Sigma$ is the set of possible messages, and wishes to send them to $m$ receivers $R_1,R_2,\ldots,R_m$. Receiver $R_j$ wants to receive the message $x_{j}$, and knows some subset $N(j)$ of the other messages. The goal is to construct an efficient encoding scheme $\cE:\Sigma^m\rightarrow\Omega$, where $\Omega$ is a finite alphabet to be transmitted by the sender, such that for any $(x_1,x_2,\ldots,x_m)\in\Sigma^m$, every receiver $R_j$ is able to decode the message $x_{j}$ from the value $\cE(x_1,x_2,\ldots,x_m)$
together with his own side information $N(j)$. We associate a directed graph $G$ with the side-information subset $N(j)$, whose vertex set is $[m]=\{1,2,\ldots,m\}$, and whose edge set consists of all ordered pairs $(i,j)$ such that $x_j\in N(i)$. Here and in what follows, we further assume that the side-information graph $G$ is undirected, that is, if $x_j\in N(i)$ then $x_i\in N(j)$. For messages that are $t$ bits long, i.e. $|\Sigma|=2^t$, we use $\beta_t(G)$ to denote the corresponding minimum possible encoding length $\lceil\log_2{|\Omega|}\rceil$. The \emph{broadcast rate} of the side-information graph $G$ is defined as 
\begin{align*}
  \beta(G)  
  :=\inf_{t}\frac{\beta_t(G)}{t}
  =\lim_{t\rightarrow\infty}\frac{\beta_t(G)}{t},
\end{align*}
where the limit exists by subadditivity of $\beta_t(G)$ and Fekete's Lemma. That is to say that 
$\beta(G)$ is the average asymptotic number of broadcast bits needed per bit of input. This quantity has received significant interest, and in this section we prove a new upper bound for it. In~\cite{BirkKol2006,Bar-Yossefetal2011,LubetzkyStav2009}, it was proved that
\begin{align}\label{IndexCodingUpperBounds}
  \alpha(G)\le \beta(G)\le \minrk_{\FF}(G)\le \chrnb(G)
\end{align}
(here $\FF$ is an arbitrary finite field and $\chrnb(G)$ is the clique-cover number of $G$). On the other hand, Blasiak et al.~\cite{BlasiakKleinbergLubetzky2013} proved that $\beta(G)\le \alpha_f(G)$.
For more background and details on the Index Coding problem,
see~\cite{BirkKol2006,Bar-Yossefetal2011,Alonetal2008,BlasiakKleinbergLubetzky2013} and references therein.

Similarly as in Section~\ref{sec:new bound}, let $f$ be a real-valued function defined on graphs, and let $f^*(G)$ be the optimal value of~\eqref{PrimalEx}. 
Now we show that if $f$ is an upper bound on the broadcast rate, that is, $\beta(G)\le f(G)$ for any graph $G$, then $f^*$ is also an upper bound on the broadcast rate. The proof is a simple extension of~\cite[Claim 2.8]{BlasiakKleinbergLubetzky2013}.

\begin{theorem}\label{IndexCodingNewBound}
  If $f$ is an upper bound on the broadcast rate, then so is $f^*$.
\end{theorem}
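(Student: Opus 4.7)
My plan is to prove the intermediate inequality
$$\beta(G) \;\le\; \sum_{S \subseteq V(G)} q(S)\, \beta(G_S)$$
for every fractional cover $q$ of $G$ (in the sense of Remark~\ref{remark:1}), and then, applying the hypothesis $\beta(G_S) \le f(G_S)$ for every induced subgraph $G_S$ and taking $q$ to be an optimizer of the dual program~\eqref{DualEx}, to conclude $\beta(G) \le f^*(G)$. This mirrors the argument behind Claim~2.8 of~\cite{BlasiakKleinbergLubetzky2013}, which treated the special case where the support of $q$ is restricted to independent sets, and the main technical content lies entirely in the intermediate inequality.

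To establish that inequality I would first reduce to the rational case by scaling, writing $q(S) = a_S/N$ with non-negative integers $a_S$ and $N$ satisfying $\sum_{S \ni x} a_S \ge N$ for every vertex $x$. Then consider the Index Coding instance on $G$ in which each message $x_i$ consists of $tN$ bits, for a large parameter $t$. For every vertex $x$, partition the $tN$ coordinate positions of $x_x$ into blocks indexed by the subsets $S \ni x$, assigning a block of $t\,a_S$ positions to each such $S$; this is feasible by the fractional cover inequality, with any surplus simply absorbed by duplicating bit positions. Now for every $S$ with $q(S)>0$, set up an auxiliary Index Coding subproblem on the induced graph $G_S$ whose input at each vertex $i \in S$ is precisely the length-$ta_S$ block of $x_i$ assigned to $S$. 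The side information of receiver $R_i$ in this subproblem is the corresponding blocks of $x_j$ for $j \in N(i) \cap S$, which is a \emph{subset} of its side information $N(i)$ in the original instance on $G$; hence any encoder--decoder pair for the subproblem can be simulated in the original setting, and by definition of $\beta(G_S)$ the subproblem admits an encoding of length at most $t\,a_S\,\beta(G_S) + o(t)$ as $t \to \infty$.

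Concatenating these encodings over all $S$ yields a valid global encoding for the original instance of total length $t\sum_S a_S \beta(G_S) + o(t)$, and each receiver $R_x$ recovers every bit of $x_x$ by running, for each $S \ni x$, the decoder of the corresponding subproblem on $G_S$; since the blocks assigned to the subsets $S \ni x$ cover all $tN$ positions of $x_x$, decoding succeeds. Dividing by $tN$ and letting $t\to\infty$ gives
$$\beta(G) \;\le\; \frac{1}{N}\sum_S a_S \beta(G_S) \;=\; \sum_S q(S)\,\beta(G_S),$$
and combining this with the hypothesis $\beta(G_S)\le f(G_S)$ for an optimal $q$ in~\eqref{DualEx} yields $\beta(G)\le f^*(G)$. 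The main delicacy, which I expect to be the only real obstacle, is the bit-level bookkeeping: one must ensure that the per-vertex block partitions cover all $tN$ positions (the fractional cover condition) and that the induced side information in each subproblem is always a subset of the original side information (so decoding can be simulated). No cross-vertex alignment of the partitions is required, since each subproblem is defined independently on its own induced subgraph.
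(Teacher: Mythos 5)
Your proposal is correct and takes essentially the same approach as the paper: reduce to a rational fractional cover, scale to an integer multi-cover, and concatenate index codes on the induced subgraphs $G_S$, using the cover condition to ensure every bit of every message is decoded by at least one subcode. The only cosmetic difference is that you factor the argument through the clean intermediate inequality $\beta(G)\le\sum_S q(S)\beta(G_S)$ before applying $\beta\le f$ pointwise, whereas the paper applies the hypothesis inline; the underlying concatenation construction and bookkeeping are the same.
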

\begin{proof}
Let $q^*$ be an optimal solution of the linear program~\eqref{DualEx}.
Without loss of generality, we can assume that each $q^*(S), S\subset V(G)$
is a nonnegative rational number, otherwise we can choose a rational number arbitrarily close to $q^*(S)$. 
By~\eqref{DualEx} we get $f^*(G)=\sum_{S}q^*(S)f(G_S)$ and $\sum_{S\owns x}q^*(S) \ge 1$ for every vertex $x$ in $G$. Let $t$ be a positive integer such that all the numbers $t\cdot q^*(S)$ are integers, and let $y_{S}=t\cdot q^*(S)$ for each $S\subset V(G)$. Then
  \begin{align*}
    t\cdot f^*(G)=\sum_{S}y_{S}f(G_S) \text{ and } \sum_{S\owns x}y_S \ge t \text{ for every vertex } x \text{ in } G.  
  \end{align*}
Namely, we cover the graph $G$ using a collection of  $y_S$ copies of $S$ for each $S\subset V(G)$. 
Set $p=\sum_{S}y_{S}$. Then, altogether we have a sequence of $p$ subsets $S_1,S_2,\dots,S_p$, in which each $S\subset V(G)$ appears $y_S$ times, such that every vertex in $G$ appears in at least $t$ of these subsets. By assumption, for each induced side-information graph $G_{S_i} (1\le i\le p)$, the average asymptotic number of broadcast bits needed per bit of input is upper bounded by $f(G_{S_i})$.
Concatenating these $p$ individual index codes for the graph $G_{S_i}$ (if for some vertex $x,\sum_{S\owns x}y_S>t$, then we may ignore extra bits), we can see that the average asymptotic number of broadcast bits needed per bit of input for graph $G$ is upper bounded by $\sum_{i}f(G_{S_i})/t=\sum_{S}y_{S}f(G_S)/t =f^*(G)$.
This concludes the proof.
\end{proof}

Let us now consider the function $\displaystyle f(G)=\inf_{\FF} \minrk_{\FF}(G)$ for any graph $G$, where the infimum ranges over all finite fields $\FF$. 
By~\eqref{IndexCodingUpperBounds} we see that $f$ is an upper bound on the broadcast rate, hence so is $f^*$ by Theorem~\ref{IndexCodingNewBound}. 
Note that for different graphs, the value of $f$ may be obtained as the minimum rank over different fields. 
Therefore, the achievable scheme given by an optimal solution of the corresponding linear program~\eqref{DualEx} might yield a scheme that uses several different fields simultaneously.
More simply, we can take $f(G)=\minrk_{\FF}(G)$ for some fixed finite field $\FF$. As $\minrk_{\FF}$ is an upper bound for the broadcast rate by~\eqref{IndexCodingUpperBounds}, we can get the following result directly from Theorem~\ref{IndexCodingNewBound}.
\begin{corollary}
  For any graph $G$ and any finite field $\FF$, $\beta(G)\le\minrk_{\FF}^*(G)$.    
\end{corollary}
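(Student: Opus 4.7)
The statement is essentially an immediate corollary, so the proof plan is short. The key observation is that Theorem~\ref{IndexCodingNewBound} is a black box: whenever we can feed it a graph invariant $f$ that upper-bounds the broadcast rate $\beta$, it produces another upper bound $f^\ast$ of the form defined by the linear program~\eqref{PrimalEx}--\eqref{DualEx}. So the plan is simply to verify that $f=\minrk_{\FF}$ satisfies the hypothesis of that theorem, and then instantiate the conclusion.

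First I would recall from the chain of inequalities~\eqref{IndexCodingUpperBounds}, namely $\alpha(G)\le\beta(G)\le\minrk_{\FF}(G)\le\chrnb(G)$, that for any (finite) field $\FF$ the function $f(G):=\minrk_{\FF}(G)$ is an upper bound on the broadcast rate of every graph $G$. This is the only nontrivial input, and it is quoted from the literature in~\eqref{IndexCodingUpperBounds}. Next, I would apply Theorem~\ref{IndexCodingNewBound} with this choice of $f$: the hypothesis of the theorem is exactly that $f$ upper-bounds $\beta$, so its conclusion yields $\beta(G)\le f^\ast(G)=\minrk_{\FF}^\ast(G)$ for every graph $G$.

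Since everything is just invocation of already-proved results, there is no real obstacle. The only thing one might want to double-check is that the definition of $\minrk_{\FF}^\ast$ used in the corollary coincides with $f^\ast$ from the linear program~\eqref{PrimalEx} when $f=\minrk_{\FF}$ — this is immediate by comparing the definitions in Sections~\ref{sec:known bounds} and~\ref{sec:new bound}. I would state the corollary's proof as a single sentence: by~\eqref{IndexCodingUpperBounds}, $\minrk_{\FF}$ is an upper bound on $\beta$, and so Theorem~\ref{IndexCodingNewBound} applied to $f=\minrk_{\FF}$ gives $\beta(G)\le\minrk_{\FF}^\ast(G)$.
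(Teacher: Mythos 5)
Your proposal matches the paper's reasoning exactly: the paper also notes that $\minrk_{\FF}$ upper-bounds the broadcast rate via~\eqref{IndexCodingUpperBounds} and then invokes Theorem~\ref{IndexCodingNewBound} with $f=\minrk_{\FF}$. Nothing further is needed.
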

By 1) of Proposition~\ref{prop:properties}, $\minrk_{\FF}^*(G)\le\alpha_f(G)$. Hence the bound $\minrk_{\FF}^*$ is at least as good as $\minrk_{\FF}$ and $\alpha_f$. The following example shows that sometimes $\minrk_{\FF}^*$ can simultaneously outperform both $\minrk_{\FF}$ and $\alpha_f$.

\begin{example}\label{IndexCodingSchlaefliModify}
  {\rm
    Let $G$ and $T$ be defined as in Example~\ref{SchlaefliModify}. From Example~\ref{ex:Schlaefli} we have $\minrk_{\FF_{11}}(G_{T})=7$. Similarly as in Example~\ref{SchlaefliModify} we get $\minrk_{\FF_{11}}^*(G)\le 71/9$, which is better than $\minrk_{\FF}$ (as $\minrk(G)=8$) and $\alpha_f$ (as $\alpha_f(G)\ge\vartheta(G)=9$).
}
\end{example}

\begin{remark}\label{FractionalHaemersBound}
{\rm
    Blasiak~\cite{Blasiak2013} and Shanmugam et al.~\cite{ShanmugamAsterisDimakis2015} independently\footnote{Here we adopt the notion in Blasiak~\cite{Blasiak2013}, which is slightly different from that in Shanmugam et al.~\cite{ShanmugamAsterisDimakis2015}.} obtained expressions for the infimum of the broadcast rate of vector linear broadcasting schemes~\footnote{A vector linear broadcasting scheme over a finite field $\FF$ is a scheme in which the message alphabet $\Sigma$ is a finite dimensional vector space over $\FF$ and the encoding and decoding functions are linear.} over all finite fields as follows. 
    Let $G$ be a graph with vertex set $V(G)=\{1,2,\dots,m\}$, and let $B$ be an $m\times m$ matrix whose entries are $k\times k$ matrices over some field $\FF$. We say that $B$ \emph{fractionally represents} the side-information graph $G$ over $\FF^k$ if $B_{ii}$ is the identity matrix of size $k$, and $B_{ij}$ is the zero matrix of size $k$ whenever $i$ and $j$ are nonadjacent.
    The \emph{fractional minrank} of $G$ is defined by
    \begin{align}\label{FractionalMinrank}
    \minrk_{f}^{\FF}(G):= \inf_{k}\frac{\min\{\rank(B): B \text{ fractionally represents } G \text{ over } \FF^k\}}{k}
    \end{align}
    and
    $$\minrk_{f}(G):=\inf_{\FF}\minrk_f^{\FF}(G).$$
    It is shown in~\cite{Blasiak2013,ShanmugamAsterisDimakis2015} that $\minrk_{f}^{\FF}(G)$ is the infimum of the broadcast rate of all vector linear broadcasting schemes over $\FF$. 
    On the other hand, we can obtain a vector linear broadcasting scheme over $\FF$ of rate  $\minrk_{\FF}^*(G)$, by using a vector linear broadcasting scheme of rate $\minrk_{\FF}(G_{S_i})$ for each induced subgraph $G_{S_i}$ in the proof of Theorem~\ref{IndexCodingNewBound}.
    Hence $\beta(G)\le\minrk_{f}^{\FF}(G)\le\minrk_{\FF}^*(G)$. 
    Note that it is very difficult to compute $\minrk_{f}^{\FF}(G)$ via~\eqref{FractionalMinrank}.
    But our graph invariant $\minrk_{\FF}^*(G)$ provides a way to approach $\minrk_{f}^{\FF}(G)$,
    since we can always get an upper bound for $\minrk_{\FF}^*(G)$, and thus for $\minrk_{f}^{\FF}(G)$,
    by solving the linear programming problem~\eqref{DualEx} or its subproblems obtained by removing some constraints from~\eqref{DualEx}.
    Blasiak~\cite{Blasiak2013} and Shanmugam et al.~\cite{ShanmugamAsterisDimakis2015} also proved that $\Theta(G)\le\minrk_{f}^{\FF}(G)$.
    See~\cite{BukhCox2018} for more properties of $\minrk_{f}^{\FF}$.
    }
\end{remark}

\begin{remark}\label{BukhCox}
{\rm
In~\cite{BukhCox2018} Bukh and Cox asked the following question:
Are there graphs for which $\vartheta(G) < \minrk(G)$, yet $\minrk_f(G) < \vartheta(G)$?
Example~\ref{G+7C_5} gives an affirmative answer to this question.
Recall that we let $G$ be the complement of the Schl\"afli graph and consider the graph $G+7 C_5$.
Then $\vartheta(G+7C_5)=9+7\sqrt{5}<\minrk(G+7C_5)=7+7\cdot 3=28$.
On the other hand, $\minrk_f(G+7C_5)\leq \minrk_{\FF_{11}}^*(G+7C_5) \leq 7 + 7\cdot 2.5 = 24.5$ which is strictly less than $\vartheta(G+7C_5)$.
}
\end{remark}

\begin{remark}{\rm
    Shanmugam, Dimakis and Langberg~\cite{Shanmugametal2014} presented an upper bound for the broadcast rate of general side-information graphs using the local chromatic number. Later, this bound is further extended by Arbabjolfaei and Kim~\cite[Theorems 3--4]{ArbabjolfaeiKim2014} and Agarwal and Mazumdar~\cite[Theorems 3--5]{AgarwalMazumdar2016} via linear programming. Similarly as in Remark~\ref{remark:1}, for undirected graphs, it is not hard to check that those bounds are always no stronger than the fractional independence number.
}
\end{remark}

\section*{Appendix}
\begin{lemma}\cite[Theorem 3.6]{Tims2013}\label{MinrankTechnique}
  Let $G$ be a graph, let $I=\{v_1,\dots,v_k\}$ be a maximum independent set in $G$, and let $u$ be a vertex not in $I$. Set $J=N(u)\cap I$, which is nonempty since $I$ is maximum. If there exists another vertex $w\in V(G)\backslash (I\cup\{u\})$ that is adjacent to $u$ but not adjacent to any vertex of $J$, then delete the edge $(u,w)$, and let $H$ be the resulting spanning subgraph of $G$. Then 
  $$\minrk(G)=\alpha(G) \text{ if and only if } \minrk(H)=\alpha(G).$$
\end{lemma}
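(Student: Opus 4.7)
The plan is to handle the forward and backward directions separately, since one is immediate and the other requires a short structural argument about the rank-minimizing matrix. I would first record the chain $\alpha(G) \le \minrk(G) \le \minrk(H)$. The first inequality holds because any matrix $B$ fitting $G$, when restricted to the maximum independent set $I$, becomes a $k \times k$ diagonal matrix with nonzero diagonal entries and therefore has rank $k = \alpha(G)$. The second inequality holds because $H$ is a spanning subgraph of $G$, so the fit-$H$ condition is strictly stronger than the fit-$G$ condition; any matrix fitting $H$ automatically fits $G$. The backward direction of the lemma is then an immediate squeeze: if $\minrk(H) = \alpha(G)$, the chain collapses and $\minrk(G) = \alpha(G)$.

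For the forward direction, suppose $\minrk(G) = \alpha(G) = k$ and let $B$ be a matrix fitting $G$ over some field with $\rank(B) = k$. Because $B_{I \times I}$ is a $k \times k$ diagonal matrix of full rank, the columns of $B$ indexed by $I$ form a basis of its column space. Write column $u$ as $B_{\cdot, u} = \sum_{v \in I} \lambda_v B_{\cdot, v}$. Reading off the $v$-th entry for $v \in I$ gives $B_{v,u} = \lambda_v B_{v,v}$, since $B_{v,v'} = 0$ for distinct $v, v' \in I$. For any $v \in I \setminus J$, the vertices $u$ and $v$ are non-adjacent in $G$, forcing $B_{v,u} = 0$ and hence $\lambda_v = 0$. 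Thus the expansion collapses to $B_{\cdot, u} = \sum_{v \in J} \lambda_v B_{\cdot, v}$.

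Now I would use the special non-adjacency hypothesis on $w$. Evaluating the $w$-th coordinate of the collapsed expansion gives $B_{w,u} = \sum_{v \in J} \lambda_v B_{w, v}$, and by hypothesis $w$ is non-adjacent to every vertex of $J$, so each $B_{w,v}$ in this sum is zero; hence $B_{w,u} = 0$. The symmetric argument on rows, using that the rows of $B$ indexed by $I$ span its row space, shows $B_{u,w} = 0$ as well. Since $B$ already has zeros at the positions $(u,w)$ and $(w,u)$ corresponding to the deleted edge, $B$ satisfies every additional zero constraint required by $H$; hence $\minrk(H) \le \rank(B) = k$. Combined with the chain applied to $H$ (i.e., $\minrk(H) \ge \alpha(G)$), this yields $\minrk(H) = \alpha(G)$.

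The one genuinely non-mechanical step is the linear-algebraic observation that when $B$ already achieves the minimum rank $\alpha(G)$, the expansion of column $u$ in the basis indexed by $I$ is automatically supported on the neighbourhood $J = N(u) \cap I$; after that, extracting $B_{w,u} = B_{u,w} = 0$ from the assumption that $w$ misses every vertex of $J$ is a one-line substitution. I expect the main obstacle not to be technical but conceptual: recognizing that the maximality of $I$, combined with $\rank(B) = \alpha(G)$, is exactly what pins down the support of those expansions, which is precisely the place where the hypothesis on $w$ can be cashed in.
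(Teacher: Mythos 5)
Your proposal is correct and follows essentially the same argument as the paper's proof: both establish the chain $\alpha(G)\le\minrk(G)\le\minrk(H)$ for the easy direction, and for the forward direction both take a rank-$k$ matrix $B$ fitting $G$, expand the row (or column) indexed by $u$ in terms of the rows (columns) indexed by $I$, observe that non-adjacency of $u$ to $I\setminus J$ restricts the support of that expansion to $J$, and then conclude $B_{u,w}=B_{w,u}=0$ from the hypothesis that $w$ is non-adjacent to every vertex of $J$. The only cosmetic difference is that you argue with columns first and invoke symmetry for rows, whereas the paper displays the matrix and argues with rows first.
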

\begin{proof}
  (This proof was given by Tims in~\cite{Tims2013}).
  It is easy to see that $\alpha(G)\le\minrk(G)\le\minrk(H)$. Hence if $\minrk(H)=\alpha(G)$, then $\minrk(G)=\alpha(G)$.
  Now we assume that $\minrk(G)=\alpha(G)=k$, and let $B$ be a matrix that fits $G$ with $\rank(B)=k$. 
  Without loss of generality, we can assume that $J=\{v_1,v_2,\ldots,v_l\}$ where $1\le l\le k$,
  and all digonal entries of the matrix $B$ are equal to $1$.
  Then we can write the matrix $B$ as follows:
  $$
  \kbordermatrix{
    \mbox{}&v_1&v_2&\ldots&v_l&v_{l+1}&\ldots&v_k&u&w\\
    v_1    &1  &0  &\ldots&0  &0      &\ldots&0  &*&0\\
    v_2    &0  &1  &\ldots&0  &0      &\ldots&0  &*&0\\
    \vdots &\vdots&\vdots&\ddots&\vdots&\vdots&\ddots&\vdots&\vdots\\
    v_l    &0  &0  &\ldots&1  &0      &\ldots&0  &*&0\\
    v_{l+1}&0  &0  &\ldots&0  &1      &\ldots&0  &0&*\\
    \vdots &\vdots&\vdots&\ddots&\vdots&\vdots&\ddots&\vdots&\vdots\\
    v_{k}  &0  &0  &\ldots&0  &0      &\ldots&1  &0&*\\
    u      &*  &*  &\ldots&*  &0      &\ldots&0  &1&*\\
    w      &0  &0  &\ldots&0  &*      &\ldots&*  &*&1\\
  }.
  $$
  (Here the entry indicated by $*$ can be any value, and we only present part of the matrix $B$.)
  Since $\rank(B)=k$ and the first $k$ rows of $B$ are independent, all the rows of $B$
  can be written as linear combinations of the first $k$ rows. In particular, we can easily verify that
  the row indicated by $u$ must be a linear combination of the first $l$ rows, and hence $B_{uw}=0$.
  Similarly, we have $B_{wu}=0$. Therefore matrix $B$ also fits $H$, and it follows that $\minrk(H)=k=\alpha(G)$.
\end{proof}

\begin{proposition}\label{thm:MinrankOfModifiedSchlaefliGraph}
  Let $G$ be the graph as plotted in Figure 1. Note that the induced subgraph of $G$ on the vertices $T=\{1,2,\dots,27\}$ is the complement of the Schl\"afli graph. Then $\minrk(G)=8$.
\end{proposition}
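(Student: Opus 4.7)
The statement has two inequalities. The \emph{upper bound} $\minrk(G)\le 8$ is by a direct block-diagonal construction. Example~\ref{ex:Schlaefli} furnishes a $27\times 27$ matrix $B_0$ over $\RR$ of rank $7$ that fits $G_T$; form the $28\times 28$ matrix $B=B_0\oplus [1]$. The diagonal entries of $B$ are nonzero and the off-diagonal entries in the appended row/column are all zero, which is compatible with the fitting constraint at vertex $28$ (the entries $B_{28,j}$ and $B_{j,28}$ are forced to vanish for $j\in T\setminus N(28)$ and are unconstrained for $j\in N(28)$, where choosing $0$ is permitted). Hence $B$ fits $G$ with $\rank(B)=8$.

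The \emph{lower bound} $\minrk(G)\ge 8$ is the main task. Monotonicity of $\minrk$ under induced subgraphs already gives $\minrk(G)\ge\minrk(G_T)=7$, so the only case to rule out is $\minrk(G)=7$. Suppose towards contradiction that $B$ is a rank-$7$ matrix over some field $\FF$ that fits $G$, and block-decompose
$$B=\begin{pmatrix} B_T & b \\ c^\top & d\end{pmatrix},$$
where $B_T$ is the principal $27\times 27$ submatrix on $T$, $d=B_{28,28}\ne 0$, and $b_j=c_j=0$ for $j\in T\setminus N(28)$. Then $\rank(B_T)\le\rank(B)=7$, while $B_T$ fits $G_T$ so $\rank(B_T)\ge\minrk(G_T)=7$; hence $\rank(B_T)=\rank(B)=7$. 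Factoring $B=UV^\top$ with $U,V\in\FF^{28\times 7}$, I assign to each vertex $i$ a pair $u_i,v_i\in\FF^7$ with $B_{ij}=u_i^\top v_j$, so that $u_i^\top v_i\ne 0$ for every $i$ and $u_i^\top v_j=0$ for every pair of distinct non-adjacent $i,j$. In particular $u_{28}$ is orthogonal to the $18$ vectors $\{v_j:j\in T\setminus N(28)\}$, and $v_{28}$ is orthogonal to the $18$ vectors $\{u_j:j\in T\setminus N(28)\}$. The plan is then to show that these eighteen vectors already span $\FF^7$, forcing $u_{28}=0$ (or $v_{28}=0$) and contradicting $u_{28}^\top v_{28}=d\ne 0$.

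The \emph{main obstacle} is establishing this spanning claim, which is a rigidity statement for rank-$7$ fittings of the complement of the Schl\"afli graph: for every such fitting $B_T$ over any field $\FF$ and for the specific $18$-subset $T\setminus N(28)$ determined by the nine neighbors of vertex $28$, the columns indexed by this subset span the full $7$-dimensional column space of $B_T$. Both the strongly regular structure of $G_T$ (parameters $(27,16,10,8)$, eigenvalues $16,4,-2$) and the explicit list $N(28)=\{1,\dots,5,11,12,23,27\}$ enter essentially here. Lemma~\ref{MinrankTechnique} offers an alternative technical route: iteratively apply it to delete each of the nine edges incident to vertex $28$, using at each step a carefully chosen maximum independent set $I\subset T$ of size $3$ (attaining the Hoffman bound for $G_T$) and a companion vertex $w\in N(28)$ satisfying the lemma's hypothesis that $w$ is not adjacent to any element of $J=N(28)\cap I$. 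Once all nine edges have been deleted, the resulting spanning subgraph is $G_T+K_1$, and additivity of $\minrk$ under disjoint union together with $\minrk(G_T)=7$ yields $\minrk(G_T+K_1)=8$; combined with the equivalences propagated by Lemma~\ref{MinrankTechnique} and the rank-$7$ decomposition above, this should pin $\minrk(G)$ to $8$. Either route reduces the proof to detailed combinatorial bookkeeping which exploits the high symmetry of the Schl\"afli complement.
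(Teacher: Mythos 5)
Your upper bound (take $B_0\oplus[1]$) is fine and matches the paper. The lower bound, however, is left as an unfinished sketch: your ``main route'' reduces everything to the claim that the $18$ vectors $\{v_j:j\in T\setminus N(28)\}$ span $\FF^7$ for every rank-$7$ fitting of $G_T$ over every field, and you explicitly acknowledge not proving this. That is the entire content of the lower bound, so the gap is essential. Your ``alternative route'' via Lemma~\ref{MinrankTechnique} is also incorrect as stated: you propose taking $u=28$ with a ``maximum independent set $I\subset T$ of size $3$,'' but $I$ must be a maximum independent set of the current $28$-vertex graph (whose independence number is $7$), not a size-$3$ subset of $T$. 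The parameters $(27,16,10,8)$ and eigenvalues $16,4,-2$ you quote also belong to the Schl\"afli graph itself, not to $G_T$, which is its complement (with parameters $(27,10,1,5)$ and eigenvalues $10,1,-5$); the Hoffman bound of $3$ therefore applies to the wrong graph. And even setting these errors aside, you never verify that the hypotheses of Lemma~\ref{MinrankTechnique} can be satisfied for each of the nine edges you want to delete.

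What the paper actually does is both simpler and different: it never touches the edges at vertex $28$. It fixes the maximum independent set $I=\{8,9,13,15,19,25,28\}$ of $G$ (so $\alpha(G)=7$), applies Lemma~\ref{MinrankTechnique} with $u=6$ and then $u=17$ to delete eleven edges \emph{at those two vertices}, and observes that the resulting spanning subgraph $H$ has a size-$8$ independent set. Hence $\minrk(H)\ge 8>7=\alpha(G)$, and the iterated biconditional of Lemma~\ref{MinrankTechnique} forces $\minrk(G)\ne\alpha(G)$, i.e.\ $\minrk(G)\ge 8$. The key move you missed is that one should delete edges that \emph{raise the independence number}, not ones that detach vertex $28$; the contradiction then comes for free from $\alpha(H)>\alpha(G)$, with no ``rigidity'' claim about rank-$7$ fittings needed.
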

\begin{proof}
  It can be checked that the set $I=\{8, 9, 13, 15, 19, 25, 28\}$ is a maximum independent set of $G$. Hence $\minrk(G)\ge|I|=7$. On the other hand, from the minimum rank of the complement of the Schl\"afli graph (see~Example~\ref{ex:Schlaefli}), we have $\minrk(G)\le 7+1 =8$. 
  Now we use Lemma~\ref{MinrankTechnique} to show that $\minrk(G)=8$.
  First, let $u=6$. Then the neighbors of $u$ are $N(6)=\{5, 13, 14, 17, 18, 21, 22, 25, 26,27\}$, and $J=N(6)\cap I=\{13, 25\}$. Set $W =\{5, 17, 18, 21, 22,27\}$. We can check that every vertex $w$ in $W$ satisfies the conditions in Lemma~\ref{MinrankTechnique}. Secondly, let $u=17$. Then $N(17)= \{1, 4, 7, 9, 12, 19, 20, 22, 24\}$, and $J=N(17)\cap I=\{9, 19\}$. Set $W=\{1, 4, 12, 22, 24\}$. We can also check that every vertex $w$ in $W$ satisfies the conditions in Lemma~\ref{MinrankTechnique}.  
  Lastly, we delete the edges $(6,5), (6,17), (6,18), (6,21), (6,22), (6,27),  (17,1), (17,4), (17,12), (17,22), (17,24)$ from $G$, and let $H$ be the resulting spanning subgraph of $G$. It can be checked that the set $\{6, 12, 15, 16, 17, 18, 24,27\}$ is a maximum independent set of $H$. Therefore, $\minrk(H)\ge8>\alpha(G)$, and hence $\minrk(G)>\alpha(G)=7$ by Lemma~\ref{MinrankTechnique}. This concludes the proof.
\end{proof}

\bibliographystyle{siamplain}
\bibliography{ZeroError}
\end{document}